\crefname{algocf}{Algorithm}{Algorithms}
\let\cref@old@stepcounter\stepcounter
\def\stepcounter#1{%
  \cref@old@stepcounter{#1}%
  \cref@constructprefix{#1}{\cref@result}%
  \@ifundefined{cref@#1@alias}%
    {\def\@tempa{#1}}%
    {\def\@tempa{\csname cref@#1@alias\endcsname}}%
  \protected@edef\cref@currentlabel{%
    [\@tempa][\arabic{#1}][\cref@result]%
    \csname p@#1\endcsname\csname the#1\endcsname}}
\newcommand{\mytodo}[2]{\xspace}
\newcommand{\myrevtodo}[2]{{%
		\let\marginpamarginnote
		\reversemarginpar
		\renewcommand{\baselinestretch}{0.8}%
		}}
\newcommand{\myinlinetodo}[2]{\todo[size=\small, color=#1!50!white, inline,
	caption={}]{#2}\xspace}
\newcommand{\registerAuthor}[3]{%
	\expandafter\newcommand\csname #2com\endcsname[1]{\mytodo{#3}{\textsc{#2}:
			##1}}%
	\expandafter\newcommand\csname
	#2revcom\endcsname[1]{\myrevtodo{#3}{\textsc{#2}: ##1}}%
	\expandafter\newcommand\csname
	#2inline\endcsname[1]{\myinlinetodo{#3}{\textsc{#2}: ##1}}%
	\expandafter\newcommand\csname
	#2inlineLater\endcsname[1]{\lv{\myinlinetodo{#3}{\textsc{#2}: ##1}}}%
}
\newtheorem{corollary}{Corollary}
\newtheorem{proposition}{Proposition}
\newtheorem{lemma}{Lemma}
\newcommand{\decprob}[3]{%
  \begin{center}%
    \begin{minipage}{0.9\linewidth}%
      \textsc{#1}\\
      \textbf{Input:} #2\\
      \textbf{Question:} #3
    \end{minipage}%
  \end{center}%
}
\newcommand{\cfpr}{\textsc{Cycle-Free Reviewing}\xspace}
\newcommand{\wcfpr}{\textsc{Weighted Cycle-Free Reviewing}\xspace}
\newcommand{\otocfpr}{\textsc{Peer Cycle-Free Reviewing}\xspace}
\providecommand*{\cupdot}{%
  \mathbin{%
    \mathpalette\@cupdot{}%
  }%
}
\newcommand*{\@cupdot}[2]{%
  \ooalign{%
    $\m@th#1\cup$\cr
    \hidewidth$\m@th#1\cdot$\hidewidth
  }%
}
\newcommand{\ddp}{\ensuremath{d_{\text{paper}}}}
\newcommand{\ccr}{\ensuremath{c_{\text{reviewer}}}}
\newcommand{\N}{\ensuremath{\mathds{N}}}
\DeclareMathOperator{\aut}{\text{aut}}
\DeclareMathOperator{\rev}{\text{rev}}
\DeclareMathOperator{\coi}{coi}
\title{Combating Collusion Rings is Hard but Possible}
\author[1]{Niclas Boehmer}
\author[2]{Robert Bredereck}
\author[1]{Andr{\'{e}} Nichterlein}
\affil[1]{ \small Technische Universit{\"a}t Berlin,  Faculty IV, Algorithmics and Computational Complexity, Berlin, Germany\protect\\
\{niclas.boehmer,andre.nichterlein\}@tu-berlin.de}
\affil[2]{ \small Humboldt-Universit{\"a}t zu Berlin, Institut f{\"u}r Informatik, Algorithm Engineering, Berlin, Germany\protect\\ robert.bredereck@hu-berlin.de}
\begin{document}

\maketitle

\begin{abstract}
A recent report of Littmann [Commun. ACM '21] outlines the existence and the fatal impact of collusion rings in academic peer reviewing.
We introduce and analyze the problem \cfpr that aims at finding a review assignment without the following kind of collusion ring:
A sequence of reviewers each reviewing a paper authored by the next reviewer in the sequence (with the last reviewer reviewing a paper of the first),
thus creating a \emph{review cycle} where each reviewer gives favorable reviews.
As a result, all papers in that cycle have a high chance of acceptance independent of their respective scientific merit.

We observe that review assignments computed using a standard Linear Programming approach typically admit many short review cycles. 
On the negative side, we show that \cfpr is NP-hard in various restricted cases
(i.e., when every author is qualified to review all papers and one wants to prevent that authors review each other's or their own papers
or when every author has only one paper and is only qualified to review few papers).
On the positive side, among others, we show that, in some realistic settings, an assignment without any review cycles of small length always exists. This result also gives rise to an efficient heuristic for computing (weighted) cycle-free review assignments, which we show to be of excellent quality in practice.  
\end{abstract}

\section{Introduction}
As recently pointed out by \citet{DBLP:journals/cacm/Littman21},
the integrity and legitimacy of scientific conference publications (particularly important in the context of computer science)
is threatened by so-called ``collusion rings'', which are sets of authors that unethically review and support
each other while breaking anonymity and hiding conflicts of interest.
Despite the fact that details are usually not disclosed for various reasons, it is inevitable that the process of assigning papers to reviewers
is the key point to engineer technical barriers against such incidents.
Whereas assignments at very small venues could be performed manually, support by (semi-)automatic systems becomes necessary
already for medium-size conferences.
Today computational support for finding review assignments is well-established and has improved the quality of the reviewing and paper assignment
process in many ways (see the surveys of \citet{survey} and \citet{DBLP:journals/cacm/PriceF17} for details).
Still there is huge potential for improving processes
and further computational support is urgently requested~\citep{DBLP:journals/cacm/PriceF17,survey}.

When aiming to prevent collusion rings, one of the most basic properties one can request from a review assignment
is that the assignment does not contain any \emph{review cycle} of length~$z$, that is, a sequence of $z$~agents
each reviewing a paper authored by the next agent in the sequence (with the last agent reviewing a paper authored by the first).
This property is of high practical relevance:
For example, in the AAAI'21 review assignment the non-existence of review cycles of length at most~$z=2$
was a soft constraint~\citep{AAAI21intro}.
Yet, there is a lack of systematic studies concerning the computation of such assignments.
Motivated by this, we propose and analyze \cfpr, the problem of computing an assignment of papers to agents that is
free of review cycles of length at most~$z$, both from a theoretical and practical perspective.

\subsection{Related Work}
The literature is rich in the general context of peer reviewing (see, e.\,g., the works of \citet{DBLP:conf/aiconf,taylor2008optimal,DBLP:journals/algorithmica/GargKKMM10,DBLP:conf/icdm/LongWPY13,DBLP:conf/aaai/LianMNW18,DBLP:conf/kdd/KobrenSM19,JMLR:v22:20-190} on computational aspects of finding a ``good'' review assignment, and the survey of~\citet{survey}).
Closest to our work are \citet{DBLP:conf/atal/BarrotLPS20} and \citet{DBLP:conf/iccnc/Guo0CWL18}.
In the context of product reviewing, among others, \citet{DBLP:conf/atal/BarrotLPS20} propose and analyze a restricted case which translates to our setting as follows:
Given a set of single-author papers and a set of agents each writing a single paper and each having some conflicts of interest over papers, find a review assignment of papers
to agents, where each agent serves as a reviewer
providing one review and each paper must receive one review.
They show that in this setting finding an assignment without review cycles of length at most~$z$
corresponds to finding a $2$-factor without cycles of length at most~$z$, which is known to be NP-hard for~$z\ge 5$ but polynomial-time solvable for $z\leq 3$ \citep{DBLP:journals/siamdm/HellKKK88}. 
Closer to our setting is that of \citet{DBLP:conf/iccnc/Guo0CWL18}, who
also consider the computation of cycle-free review assignments.
They propose two simple heuristics
and conduct experiments measuring the quality of their heuristics and the number of review cycles in a weight-maximizing solution
on two instances, mostly focusing on the influence of the number of reviews per paper and per reviewer.

\subsection{Outline and Contributions}
Our contribution is threefold.
First, in \cref{sec:hardness}, we show the intractability of \cfpr in various restricted settings:
We show NP-hardness even when just forbidding review cycles of length at most two in ``sparse'' and ``dense'' settings (e.g., if each reviewer can review only ``few'' or can review ``almost all'' papers, see \cref{th:cfpr-sparse,th:full,th:cof}).
Furthermore, solving a question left open by \citet{DBLP:conf/atal/BarrotLPS20}, we show NP-hardness if each agent writes just one single-author paper and can review only few papers (\cref{th:otocfprNP}).

Second, in \cref{sec:heuristic}, we develop greedy heuristics.
In contrast to \citet{DBLP:conf/iccnc/Guo0CWL18} we provide a theoretical analysis for the heuristics.
In particular, we prove that, if the considered instance satisfies certain near-realistic conditions (such as that each paper has few authors and that for each paper there are many possible reviewers), then these heuristics are guaranteed to output a $z$-cycle-free review assignments in polynomial time. 

Third, in \cref{se:experiments}, we present and discuss the results of our experiments.
Our core results are:
\begin{enumerate}
	\item Existing linear-programming-based methods for computing maximum-weight review assignments (as often used in practice) produce assignments where a high fraction (20\% or more) of agents and papers belong to some review cycles of length two.
	\item For~$z \in \{2,3,4\}$ maximum-weight $z$-cycle-free assignments computed by one of our heuristics (see \cref{sec:heuristic}) or computed via Integer Linear Programming are almost as good as the maximum-weight review assignments with cycles (solution quality loss less than 4\% resp. 1\%).
	\item Somewhat surprisingly, we show that adding additional reviewers that are authors of some papers to the reviewer pool increases the number of papers that belong to
    review cycles in maximum-weight (non cycle-free) assignments.
\end{enumerate}

\section{Preliminaries}
\label{sec:preliminaries}

For~$n\in \N$, we set~$[n] := \{1,\ldots,n\}$.
In an instance of \cfpr, we are given a set $P$ of papers and set $A$ of agents, where each paper $p\in P$ is authored by a subset $\aut(p)\subseteq A$ of agents. 
Moreover, we are given for each agent $a\in A$ a subset $\rev(a)\subseteq P$ of papers the agent is \emph{qualified to review}\footnote{Being ``qualified to review'' can encode that the agent is capable of reviewing the paper or that the agent does not have a conflict of interest with one of the co-authors or both.}.
We capture this information in a bipartite graph $(A\cupdot P,E_A\cupdot E_P)$ with $E_A=\{(a,p)\mid a\in A, p\in \rev(a)\}$ and $E_P=\{(p,a)\mid p\in P, a\in 
\aut(p)\}$ (see also \cref{tab:notation} for an overview). 
A \emph{(peer) review assignment} $E'\subseteq E_A$ is a subset of edges from agents to papers, where we say that $a$ \emph{reviews} $p$ in $E'$ if $(a,p)\in E'$. 
Given a review assignment $E'\subseteq E_A$, for an agent $a\in A$, let $N^+(a,E')=\{p\in P\mid (a,p)\in E'\}$ be the subset of papers agent $a$ reviews in $E'$ and, for a paper $p\in P$, let $N^-(p,E')=\{a\in A\mid (a,p)\in E'\}$ be the subset of agents that review $p$ in $E'$. 
For~$c,d \in \N$ a review assignment $E'\subseteq E_A$ is called \emph{$c$-$d$-valid} if each agent reviews at most~$c$ papers and each paper is reviewed by~$d$ agents, that is, $|N^+(a,E')|\leq c$ for all $a\in A$ and $|N^-(p,E')|= d$ for all $p\in P$. 
In a review assignment $E'\subseteq E_A$, we say that papers $p_1,\dots, p_z$ and agents $a_1,\dots, a_z$ form a \emph{review cycle (of length $z$)} if $a_i$ is 
an author of $p_i$ \big($(p_i,a_i)\in E_P$\big) for all $i\in [z]$, $a_i$ reviews $p_{i+1}$ in $E'$ \big($(a_i,p_{i+1}) \in E'$\big) for $i\in [z-1]$ and $a_z$ reviews $p_1$ in $E'$ \big($(a_z,p_{1})  \in E'$\big).
Notably, a review cycle of length $z$ in $E'$ corresponds to a directed cycle of length $2z$ in $(A\cupdot P,E'\cupdot E_P)$ and a review cycle of length one corresponds to an author reviewing one of its own papers. We say that a review assignment $E'$ is $z$-cycle free if there is no review cycle of length $i\in [z]$ in $E'$. 

\begin{table}
	\caption{Notation overview}
	\label{tab:notation}%
	\setlength{\tabcolsep}{2pt}%
	\begin{tabularx}{\linewidth}{lX}
		\toprule
		Variable	& Explanation \\
		\midrule
		$V = A \cupdot P$ & vertex set consisting of agents~$A$ and papers~$P$ with~$n_A=|A|$ and~$n_P=|P|$\\
		$E_A$		& $(a,p) \in E_A \subseteq  A \times P$ shows~$a$ can review~$p$ \\
		$E_P$		& $(p,a) \in E_P \subseteq  P \times A$ shows~$a$ authors~$p$ \\
		$N^-(v,E)$	& in-neighbors of~$v \in V$ wrt.~$E \subseteq \binom{V}{2}$, i.\,e., $N^-(v,E) := \{u \in V \mid (u,v)\in E\}$ \\
		$N^+(v,E)$	& out-neighbors of~$v \in V$ wrt.~$E \subseteq \binom{V}{2}$, i.\,e., $N^+(v,E) := \{u \in V \mid (v,u)\in E\}$ \\
		$\Delta^-_{U}$, $\Delta^+_{U}$	& maximum in- and out-degree in~$U$ resp., e.\,g., $\Delta^-_U := \max_{u \in U} |N^-(u,E_A \cupdot E_P)|$ \\
		$\delta^-_{U}$, $\delta^+_{U}$	& minimum in- and out-degree in~$U$ resp., e.\,g., $\delta^+_U := \min_{u \in U} |N^+(u,E_A \cupdot E_P)|$ \\
		$\Delta_A^-$, $\delta_A^-$	& maximum resp. minimum number of papers per author \\
		$\Delta_P^+$, $\delta_P^+$ 	& maximum resp. minimum number of authors per paper \\
		$\Delta_A^+$, $\delta_A^+$	& maximum resp. minimum number of papers any author is qualified to review \\
		$\Delta_P^-$, $\delta_P^-$	& maximum resp. minimum number of potential reviewers for any paper \\
		\bottomrule
	\end{tabularx}
\end{table}

Using this notation, we define our central problem
and refer to \cref{tab:notation} for further necessary variable definitions: 
\decprob{[Weighted] \cfpr}
{A directed bipartite graph $(A\cupdot P,E_A\cupdot E_P)$ and non-negative integers $\ccr$, $\ddp$, and $z$ [and a weight function $w:E_A\mapsto \mathbb{Z}$ and an integer $W$].}
{Is there a $\ccr$-$\ddp$-valid and $z$-cycle-free review assignment $E'\subseteq E_A$ [of weight at least $W$, i.e., $\sum_{e\in E'} w(e)\geq W$]?}

\section{NP-Hardness in Various Restricted Cases}
\label{sec:hardness}
From the work of \citet[Theorem 4.12]{DBLP:conf/atal/BarrotLPS20} it follows that \cfpr is NP-hard in the single-author-single-paper setting ($\Delta_A^-=\Delta_P^+=1$) even if $\ccr=\ddp=1$ and $z=2$. 
However, as in reality instances of \cfpr are hardly arbitrary but have a quite strong structure, in this section we prove that the NP-hardness of \cfpr upholds even if the given instance fulfills further quite restrictive conditions, e.g., each agent is qualified to review all papers or our problem specific parameters ($\Delta_A^-,\Delta_P^+,\Delta_A^+,\Delta_P^-,\ccr,\ddp, z$) are small constants.  
\subsection{Sparse Review Graph and Small Weights}
We start by considering the case where all our parameters are small.
Specifically, we show the NP-hardness of \cfpr for arbitrarily $z\geq 2$ even if each paper is only authored by at most two agents, each agent authors at most two papers, each agent is only qualified to review at most three papers, and for each paper only at most three agents are qualified to review it (see \cref{tab:notation} for definitions).
\begin{restatable}{theorem}{cfprsparse}
	\label{th:cfpr-sparse}
	For any $z\geq 2$, \cfpr is NP-hard, even if $\Delta_A^+=\Delta_P^-=3$, $\Delta_A^-=\Delta_P^+= 2$, $n_A=n_P$, and $\ccr=\ddp=1$. 
	The hardness results still hold if agents are not allowed to review papers of co-authors.
\end{restatable}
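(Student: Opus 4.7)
The plan is to reduce from a restricted version of 3-SAT in which each variable has a bounded number of occurrences; the tight degree caps $\Delta_A^+ = \Delta_P^- = 3$ and $\Delta_A^- = \Delta_P^+ = 2$ leave little room otherwise. Because $\ccr = \ddp = 1$ and $n_A = n_P$, a $\ccr$-$\ddp$-valid assignment is exactly a perfect matching $M$ in the bipartite review graph, and the task becomes finding such an $M$ so that the digraph $M \cupdot E_P$ has no directed cycle of length at most $2z$ (a review cycle of length $j$ corresponds to a directed cycle of length $2j$). For $z=2$ this forbids agents reviewing their own papers and pairs of agents reviewing each other's papers.

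I would first work out the case $z = 2$. A variable gadget is a short alternating cycle of agents and papers admitting exactly two perfect matchings, interpreted as the two truth values of the variable. Each variable gadget is linked to the clauses containing its literals via short authorship/review attachments engineered so that any locally chosen matching on a clause gadget forms a forbidden length-$2$ review cycle with the authorship edges of the variable gadgets unless at least one of the clause's literals is ``true''. The bounded-occurrence restriction on 3-SAT matches the degree caps: each agent can author only two papers and be qualified to review only three, so each variable's truth value can be transmitted to only a constant number of clauses.

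To extend the hardness to arbitrary $z \geq 2$, I would subdivide the authorship edges involved in each intended forbidden length-$2$ cycle with $z - 2$ auxiliary agent/paper pairs, where each auxiliary agent authors a single paper and is the unique qualified reviewer of another single paper. These insertions have degree at most two on both sides, preserve all degree caps, and stretch every intended length-$2$ cycle into one of length exactly $z$ while introducing no new shorter cycles. The strengthening that ``agents do not review papers of co-authors'' is obtained by construction: in every gadget, an agent's reviewing options are chosen to avoid papers authored by her co-authors.

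The main obstacle is ruling out spurious short cycles arising from interactions of matching edges across different gadgets. The very tight degree bounds leave almost no slack for auxiliary vertices, so I expect the construction to rely on a modular layout in which all gadget-to-gadget connections are long paths of subdivided agent/paper pairs, ensuring that any inter-gadget cycle not intended by the reduction is automatically too long to be forbidden.
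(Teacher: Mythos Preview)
Your overall plan---reducing from bounded-occurrence 3-SAT with a variable gadget that is a short alternating cycle admitting exactly two perfect matchings, and clause gadgets linked to it by cross-gadget authorship edges, all within the degree caps---is essentially what the paper does. The paper reduces from the variant in which every variable occurs at most twice positively and twice negatively; its variable gadget is precisely the $6$-cycle you describe (three agents $a_x,a_{\bar x},b_x$ and three papers $p_x,p_{\bar x},q_x$), and each clause contributes five agents and five papers. The co-author restriction also holds by construction, as you anticipate.

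The divergence, and the gap, is in how you handle general $z$. The paper does \emph{not} build a $z=2$ instance and then subdivide. It constructs one instance and proves that a satisfying assignment yields a review assignment with \emph{no review cycle of any length}: only the literal papers $p_\ell$ and the clause papers $p_c^i$ have authors at all, and any cycle would force some $a_c^i$ to review $p_c^i$ (which happens only when literal $\ell_i$ is satisfied) while simultaneously $a_{\ell_i}$ reviews $p_{\ell_i}$ (which happens only when $\ell_i$ is falsified). The backward direction uses only that $2$-cycles are forbidden. Hence a single instance witnesses hardness for every $z\ge 2$ simultaneously, with no stretching.

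Your subdivision step, as written, does not achieve this. If your $z=2$ YES-case assignment still contains a $3$-cycle (nothing in your sketch prevents this), uniformly inserting $k$ auxiliary pairs along each authorship edge multiplies every review-cycle length by $k+1$; you would then need an integer $k\ge 0$ with $2(k+1)\le z<3(k+1)$, and for $z=3$ there is none. Restricting the subdivision to ``the authorship edges involved in each intended forbidden length-$2$ cycle'' does not help either, because in the NO direction \emph{which} $2$-cycle arises depends on the matching chosen, so essentially all cross-gadget authorship edges are potentially involved. The clean fix---and the paper's actual argument---is to design the gadgets so that the satisfying assignment is already acyclic; then no subdivision is needed at all.
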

\begin{proof}
	We reduce from an NP-hard variant of \textsc{Satisfiability} where each clause consists of exactly three literals and each variable occurs positive in at most two clauses and negative in at most two clauses \citep{DBLP:journals/eccc/ECCC-TR03-049}.  
	
	\paragraph{Construction.}
	Given an instance of \textsc{Satisfiability} consisting of a set $X$ of variables and a set $C$ of clauses, we set $\ddp=\ccr=1$ and $z$ to some integer greater than one. We construct the set $A$ of agents and the set $P$ of papers as follows.
	For each variable $x\in X$, we introduce three agents $a_x$, $a_{\bar{x}}$, and $b_x$ and three papers $p_x$, $p_{\bar{x}}$, and $q_x$ ($q_x$ has no author and can be considered as a dummy paper). 
	Agents $a_x$ and $b_x$ are qualified to review $p_x$, agents $a_{\bar{x}}$ and $b_x$ are qualified to review $p_{\bar{x}}$ and agents $a_x$ and $a_{\bar{x}}$ are qualified to review $q_x$. 
	Intuitively, either does $a_x$ review $p_x$ (which corresponds to setting $x$ to false) or $a_{\bar{x}}$ review $p_{\bar{x}}$ (which corresponds to setting $x$ to true).
	
	For each clause $c=\ell_1 \vee \ell_2 \vee \ell_3$, we introduce three agents $a_c^1$, $a_c^2$, and $a_c^3$ and three papers $p_c^1$, $p_c^2$, and $p_c^3$ where $a_c^i$ is qualified to review $p_c^i$ for $i\in [3]$. 
	Moreover, we introduce two dummy agents that are both qualified to review $p_c^1$, $p_c^2$, and $p_c^3$ and two dummy papers who $a_c^1$, $a_c^2$, and $a_c^3$ are all qualified to review. 
	Notably, for one $i\in [3]$, $a_c^i$ needs to review $p_c^i$ (which corresponds to $c$ being fulfilled because of~$\ell_i$). 
	
	Concerning the authors of each paper, for each clause $c=\ell_1 \vee \ell_2 \vee \ell_3$ and $i\in [3]$, $a_c^i$ is an author of $p_{\ell_i}$ and $a_{\ell_i}$ is an author of $p^i_{c}$. 
	
	It is easy to see that each agent is only qualified to review at most three papers and that for each paper only at most three agents are qualified to review it. 
	Moreover, as each literal only appears in at most two clauses, every paper has at most two authors and each agent authors at most two papers. 
	Moreover, note that $|A|=|P|$, implying that each agent has to review \emph{exactly} one paper.
	
	{\bfseries ($\Rightarrow$)} Let $Z$ be the set of variables that are set to true in a satisfying assignment of the given \textsc{Satisfiability} instance.
	Then, for $x\in Z$, we assign $b_x$ to $p_x$, $a_x$ to $q_x$, and $a_{\bar{x}}$ to $p_{\bar{x}}$, while for $x\notin Z$, we assign $b_x$ to $p_{\bar{x}}$, $a_{\bar{x}}$ to $q_x$, and $a_{x}$ to $p_{x}$. 
	For a clause $c=\ell_1 \vee \ell_2 \vee \ell_3$, let $\ell_{i^*}$ with $i^*\in [3]$ be a literal from $c$ that is set to true by the given assignment (such a literal exists because the given assignment is satisfying). 
	Then, we set $a_c^{i^*}$ to review $p_c^{i^*}$. The two dummy agents from this clause are assigned arbitrarily to $p_c^{i}$ for $i\in [3]\setminus \{i^*\}$ and the agents $a_c^i$ for $i\in [3]\setminus \{i^*\}$ are assigned arbitrarily to the two dummy papers. 
	To show that the constructed assignment does not contain a review cycle (of arbitrary length) note that only papers that have an author are papers $p_{\ell}$ for some literal $\ell$ (which are authored by $a_c^i$ for some $c\in C$ and $i\in [3]$ where $\ell$ appears in $c$ as the $i$th literal) and papers $p^i_{c}$ for some $c=\ell_1\vee \ell_2\vee \ell_3 \in C$ and $i\in [3]$ (which are authored by $a_{\ell_i}$). 
	Thus, every review cycle of length at least two needs to contain an agent $a_c^i$ for some $c\in C$ and $i\in [3]$ and $a_{\ell}$, where $\ell$ appears in $c$ as the $i$th literal, and $a_c^i$ reviews $p_c^i$ and $a_{\ell}$ reviews $p_{\ell}$. 
	For $a_c^i$ to review $p_c^i$ it needs to hold that the given assignment satisfies $\ell$. 
	However, by our construction of the review assignment, $a_{\ell}$ reviewing $p_{\ell}$ implies that $\bar{\ell}$ is satisfied. 
	Thus, no review cycle exists.
	
	{\bfseries ($\Leftarrow$)} Assume we are given a $1$-$1$-valid $z$-cycle-free review assignment.  
	Let $Y:=\{x\in X\mid a_{\bar{x}} \text{ reviews } p_{\bar{x}}\}$. 
	We claim that the assignment $\alpha$ which sets all variables in $Y$ to true and all variables in $X\setminus Y$ to false satisfies the given formula. 
	Assume for the sake of contradiction that there exists a clause $c=\ell_1\vee \ell_2 \vee \ell_3\in C$ which is not satisfied by $\alpha$. 
	As the given assignment is $1$-$1$-valid and we have the same number of agents and papers in the constructed instance, there is a $i^*\in [3]$ such that $a_c^{i^*}$ reviews $p_c^{i^*}$. Note that by the same reasoning, for each $x\in X$, either does $a_x$ review $p_x$ or $a_{\bar{x}}$ review $p_{\bar{x}}$. 
	Thus, if a literal $\ell$ is not satisfied by $\alpha$, then $a_{\ell}$ reviews $p_{\ell}$.
	As $\ell_{i^*}$ is not satisfied by $\alpha$ , $a_{\ell_{i^*}}$ reviews $p_{\ell_{i^*}}$. 
	Thus, $a_c^i$ and $a_{\ell_{i^*}}$ form a review cycle of length two, as $a_c^i$ reviews $p_c^i$, which is authored by $a_{\ell^*_i}$, and $a_{\ell^*_i}$ reviews $p_{\ell^*_i}$, which is authored by $a_c^i$, a contradiction. \end{proof}
	
	The above reduction crucially relies on the ``sparsity'' of the qualifications, i.e., that each agent is qualified to review between two and three papers and that for each paper only two or three agents are qualified to review it. 
	Motivated by the observation that, in practice, reviewers are typically qualified to review more than just two or three papers and that for each paper there typically exists more than just two or three qualified reviewers, it is a natural question whether our above hardness result still extends to this case. 
	We answer this question affirmative by proving hardness for arbitrary $\delta_A^+$ and $\delta_P^-$, i.e., for the case where each agent is qualified to review at least $\delta_A^+$ papers and for each paper there exist at least  $\delta_P^-$ agents that are qualified to review to:
	\begin{proposition} \label{pr:extension-Th1}
	 For any $z\geq 2$, $\delta_P^-\geq 2 \leq \delta_A^+$, \cfpr is NP-hard, even if  $\Delta_A^-=\Delta_P^+= 2$, $n_A=n_P$, and $\ccr=\ddp=1$. 
	\end{proposition}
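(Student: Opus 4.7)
The plan is to reduce from the instances produced by \cref{th:cfpr-sparse}. Given such an instance $I_0$ with real agent set $A_R$ and real paper set $P_R$ ($|A_R|=|P_R|=:N$, $\ccr=\ddp=1$, $\Delta_A^-=\Delta_P^+=2$), I will augment $I_0$ with two self-contained padding gadgets so that the resulting instance $I$ satisfies $\delta_A^+\geq d_A$ and $\delta_P^-\geq d_P$ for any prescribed bounds $d_A,d_P\geq 2$, without disturbing any of the other parameters.

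Set $M:=\max(d_A,d_P)$. Each gadget adds $M$ new agents that author nothing and $M$ new papers that have no author. \emph{Gadget A} adds dummy agents $T$ and dummy papers $S_T$: every $t\in T$ is qualified to review exactly the papers of $S_T$, and every real agent $a\in A_R$ is additionally qualified to review every paper of $S_T$. \emph{Gadget B} adds helper agents $H$ and dummy papers $S_H$: every $h\in H$ is qualified to review every paper in $P_R\cup S_H$, and no agent outside $H$ is qualified to review any paper in $S_H$.

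The structural core of the argument is to show that in every $1$-$1$-valid review assignment of $I$ both gadgets ``self-close'': $T$ perfectly matches $S_T$, $H$ perfectly matches $S_H$, and consequently $A_R$ must be perfectly matched with $P_R$ through the original qualifications of $I_0$. Indeed, since $T$ is only qualified for $S_T$, its $M$ reviews already saturate the $M$ single-reviewer slots of $S_T$, forbidding any real agent from reviewing $S_T$; symmetrically, $S_H$ can only be reviewed by $H$, so covering $S_H$ consumes all $M$ helpers and none may review a real paper. Thus $A_R$ is forced into a perfect matching with $P_R$ via the qualifications of $I_0$, which is precisely a $1$-$1$-valid assignment for $I_0$. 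As no agent of $T\cup H$ authors anything and no paper of $S_T\cup S_H$ has an author, no review cycle can pass through a padding vertex; a $z$-cycle-free assignment for $I$ therefore restricts to a $z$-cycle-free assignment for $I_0$, and conversely any yes-instance of $I_0$ extends to one for $I$ by fixing any perfect matchings $T\leftrightarrow S_T$ and $H\leftrightarrow S_H$.

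The main obstacle I anticipate is the self-closing argument, which has to exclude every leakage between the gadgets and the original instance and hinges on careful degree counting using $\ccr=\ddp=1$ and $|A|=|P|$. Parameter bookkeeping is then routine: since the added vertices do not touch the authorship graph, $\Delta_A^-=\Delta_P^+=2$ is preserved, $n_A=n_P=N+2M$, and the minimum qualifications satisfy $\delta_A^+=M\geq d_A$ (attained on $T$) and $\delta_P^-=M\geq d_P$ (attained on $S_H$).
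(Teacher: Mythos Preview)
Your proposal is correct and is essentially the same construction as the paper's own proof: your gadget~$A$ with agents~$T$ and papers~$S_T$ is exactly the paper's pair~$(A'',P'')$, and your gadget~$B$ with agents~$H$ and papers~$S_H$ is exactly the paper's pair~$(A',P')$, with your~$M$ playing the role of the paper's~$\delta=\max(\delta_A^+,\delta_P^-)$. The self-closing argument (forcing~$T$ onto~$S_T$ because~$T$ has nowhere else to go and every agent must review exactly one paper when~$n_A=n_P$, and forcing~$H$ onto~$S_H$ because nobody else can review~$S_H$) matches the paper line for line.
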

	\begin{proof}
	Let $\delta:=\max(\delta_A^+,\delta_P^-)$.
	We reduce from the restricted NP-hard variant of \cfpr considered in \Cref{th:cfpr-sparse}. 
     Given an instance $\mathcal{I}=((A\cupdot P,E_A\cupdot E_P),\ccr=1,\ddp=1,z=2)$ of \cfpr with $\Delta_A^-=\Delta_P^+= 2$, we modify the instance $\mathcal{I}$ by  introducing two sets $A'$ and $A''$ of $\delta$ agents each and two sets $P'$ and $P''$ of $\delta$ papers each. All agents from $A'$ are qualified to review all papers from $P'$ and from $P$. In addition to being qualified to review some papers from $P$ (as captured in $E_A$), all agents from $A$ are qualified to review all papers from $P''$. Moreover, all agents from $A''$ are qualified to review all papers from $P''$. Thereby, all agents are qualified to review at least $\delta$ papers and for each paper at least $\delta$ agents are qualified to review it. Notably, we still have $|A|=|P|$. Thus, as agents from $A''$ are only qualified to review papers from $P''$ and $|A''|=|P''|$, all papers from $P''$ need to be reviewed by agents from $A''$ (which is always possible to do in without creating a review cycle as no paper from $A''$ has an author). Similarly, as papers from $P'$ can only be reviewed by agents from $A'$ and $|A'|=|P'|$, all agents from $A'$ need to review papers from $P'$ (which is always possible to do in without creating a review cycle as no paper from $A'$ has an author). Thus, all agents from $A$ need to review papers from $P$ from which the correctness of the reduction directly follows.
     
     Lastly, note that we did not modify the set of authors for any paper from $P$ and did not add papers with an author. Thus, it still holds in the modified instance that each agent authors at most two papers and each paper has at most two authors ($\Delta_A^-=\Delta_P^+= 2$). 
\end{proof}

While we prove hardness for arbitrary $\delta_A^+$ and $\delta_P^-$, in our construction from \Cref{pr:extension-Th1}, there are always agents that are not qualified to review ``many'' papers (around $\frac{2}{3}$) and always papers that cannot be reviewed by ``many'' agents (around $\frac{2}{3}$). 
Thus, interpreting a qualification as the absence of a conflict of interest, for our NP-hardness agents need to have many conflicts. 
In \Cref{sec:heuristic}, we prove that this does not happen by accident, as if the number of conflicts per agent/paper (and $\Delta_A^-$, $\Delta_P^+$, $\ccr$, and $\ddp$) are ``small'', then \cfpr always admits a solution.

In \wcfpr it is possible to encode the ``qualifications'' of agents into weights:
If we modify the reduction from above and give an agent-reviewer pair weight one if the agent is qualified to review the paper and weight zero otherwise, we get that \wcfpr is NP-hard even if each agent is qualified to review all papers and we have few non-zero weights.
\begin{corollary}\label{co:weights}
 For any $z\geq 2$, \textsc{Weighted Cycle-Free Peer Reviewing} is NP-hard, even if each agent is qualified to review all papers, each agent gives only at most three papers a non-zero weight, for each paper at most three agents give it a non-zero weight,  $\Delta_P^+\leq 2\geq \Delta_A^-$, $n_A=n_P$, and $\ccr=\ddp=1$. 
\end{corollary}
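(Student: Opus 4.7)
The plan is to reduce directly from the restricted variant of \cfpr established in \cref{th:cfpr-sparse}, adapting the hint sketched in the paragraph preceding the statement. Given a \cfpr instance $((A\cupdot P, E_A\cupdot E_P), \ccr=1, \ddp=1, z)$ with $n_A=n_P$, $\Delta_A^-, \Delta_P^+\le 2$, $\Delta_A^+, \Delta_P^-\le 3$, I would build a \wcfpr instance on the same vertex set $A\cupdot P$ and same authorship edges $E_P$, but replace $E_A$ by the \emph{complete} set of reviewing edges $A\times P$ (so every agent is qualified to review every paper). Define the weight $w\colon A\times P\to\{0,1\}$ by $w(a,p)=1$ if $(a,p)\in E_A$ and $w(a,p)=0$ otherwise, and set the target weight $W:=n_A$. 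The parameters $\ccr, \ddp, z$ are carried over unchanged.

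The key observation is that since $n_A=n_P$ and $\ccr=\ddp=1$, every $\ccr$-$\ddp$-valid review assignment in the new instance must be a perfect matching between $A$ and $P$, hence of size exactly $n_A$. Thus achieving weight at least $W=n_A$ forces every assigned edge $(a,p)$ to satisfy $w(a,p)=1$, i.e., to lie in the original $E_A$. Conversely, any $1$-$1$-valid $z$-cycle-free assignment in the original instance has size exactly $n_A$ and all its edges lie in $E_A$, so it has weight $n_A=W$ in the new instance and remains $z$-cycle-free (since we did not touch $E_P$). This gives a weight-preserving bijection between valid $z$-cycle-free assignments of the original instance and weight-$W$ valid $z$-cycle-free assignments of the new instance.

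It remains to verify the parameter bounds claimed in the statement. Since we kept $E_P$ unchanged, $\Delta_A^-\le 2$ and $\Delta_P^+\le 2$ and $n_A=n_P$ are inherited from \cref{th:cfpr-sparse}, and $\ccr=\ddp=1$ is explicit. Each agent has non-zero weight exactly on its original $\rev(a)$, which has size at most $\Delta_A^+\le 3$; symmetrically, each paper is given non-zero weight by at most $\Delta_P^-\le 3$ agents. Every agent is qualified to review every paper by construction.

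The main obstacle is essentially cosmetic: one must be careful that the weight threshold $W=n_A$ together with the cardinality constraints of a $1$-$1$-valid assignment genuinely forces all chosen edges to be weight-$1$ (rather than allowing, say, fewer high-weight edges to compensate). This is ensured here because a $1$-$1$-valid assignment has exactly $n_A$ edges and each edge weighs $0$ or $1$, so weight $\ge n_A$ is attainable only by picking $n_A$ edges of weight $1$. With that in hand, the correctness of the reduction is immediate.
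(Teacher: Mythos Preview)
Your proposal is correct and follows exactly the approach the paper sketches in the paragraph preceding the corollary: encode the original qualifications from \cref{th:cfpr-sparse} as $0/1$ weights over a complete reviewing relation and set the weight threshold to~$n_A$ so that any valid assignment must use only weight-$1$ edges. Your additional explicit verification that a $1$-$1$-valid assignment on~$n_A=n_P$ vertices has exactly~$n_A$ edges (hence weight~$\ge n_A$ forces all edges to have weight~$1$) and your check of the parameter bounds are more detailed than what the paper provides, but the argument is the same.
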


\subsection{No Conflicts of Interest}
We now extend the hardness from \Cref{co:weights} for the case where each agent is qualified to review all papers (no conflicts) to the unweighted case. However, our new reduction relies on the existence of papers with many authors and agents authoring many papers.

To show that \cfpr is NP-hard even if each agent is qualified to review all papers, $n_A=n_P$, $\ccr=\ddp=1$, and $z=2$ (\Cref{th:full}), we reduce from \textsc{Multicolored Independent Set} where we are given a graph $G$ with vertices partitioned into $k$ sets $V^1,\dots, V^k$ (to which we refer as color classes) and the question is whether there exists a subset of $k$ vertices, containing one vertex from each class, that are pairwise non-adjacent. 
We denote as $n:=|V^1|$ the number of vertices in the first color class and assume without loss of generality that $n>k$ and that $|V^c|:=n+c-1$ for $c\in [k]$ (note that we can do so because we can always add vertices that are connected to all other vertices and put them into one of the color classes).

\paragraph{Construction.} Given an instance $\mathcal{I}$ of \textsc{Multicolored Independent Set} $G=(V=(V^1,\dots, V^k),E)$, we construct an instance $\mathcal{I}'$ of \cfpr as follows. 
For each color $c\in [k]$, we add a \emph{special agent} $a^c_*$ and a \emph{special paper} $p^c_*$. 
Moreover, for each vertex $v\in V^c$, we add a \emph{vertex agent} $a^c_v$ and a \emph{vertex paper} $p^c_v$. 
Further, we add $n+c-2$ \emph{dummy agents} $\tilde{a}^c_1, \dots, \tilde{a}^c_{n+c-2}$ and $n+c-2$ \emph{dummy papers} $\tilde{p}^c_1, \dots, \tilde{p}^c_{n+c-2}$. 
Lastly, we insert an agent $a^*$ and a paper $p^*$. 

The paper $p^*$ is authored by all vertex agents and dummy agents. 
For color $c\in [k]$, $p^c_*$ is authored by all vertex und dummy agents from colors $c'\neq c\in [k]$ and agent $a^*$. 
Further, all dummy papers $\tilde{p}^c_i$ for $i\in [n+c-2]$ are authored by the special agent $a^c_*$. 
For a vertex $v\in V^c$, paper $p^c_v$ is authored by the special agent $a^c_*$, all agents corresponding to vertices from $V^c\setminus \{v\}$ or to vertices adjacent to $v$ in $G$, i.e., $p^c_v$ is authored by agents $\{a^c_*\}\cup\{a^c_{v'}\mid v\neq v'\in V^c\}\cup \{a^{c'}_{v'} \mid c'\in [k], v'\in V^{c'}, \{v,v'\}\in E \}$. 
Each agent is qualified to review all papers and we set $\ccr=\ddp=1$ and $z=2$.

\begin{lemma} \label{le:fd}
	If the given instance $\mathcal{I}$ of \textsc{Multicolored Independent Set} is a YES-instance, then the constructed instance $\mathcal{I}'$ of \cfpr is a YES-instance.
\end{lemma}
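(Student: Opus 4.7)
Given a multicolored independent set $I = \{v_c \in V^c\}_{c \in [k]}$, my plan is to exhibit an explicit review assignment and then verify, via a short case analysis, that no review cycle of length at most $2$ arises. The assignment I would propose is:
\begin{itemize}
\item $a^* \to p^*$ and $a^c_* \to p^c_*$ for every color $c$;
\item $a^c_{v_c} \to p^c_{v_c}$ for every color $c$ (the ``selected'' vertex agents review their selected vertex papers);
\item for every other vertex $v \in V^c \setminus \{v_c\}$, send the vertex agent $a^c_v$ to a distinct dummy paper $\tilde{p}^c_{i(v)}$ (there are exactly $n+c-2$ such vertices and dummy papers, so any bijection works);
\item for every color $c$, send the $n+c-2$ dummy agents $\tilde a^c_1,\dots,\tilde a^c_{n+c-2}$ bijectively to the $n+c-2$ remaining vertex papers $p^c_v$ with $v \in V^c \setminus \{v_c\}$.
\end{itemize}

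Since the construction has $n_A = n_P$ and every agent is qualified for every paper, validity reduces to checking that the map above is a bijection between agents and papers, which follows directly from a quick count within each color class. A quick check against the construction shows that no agent is assigned a paper it authored (for instance, $a^c_*$ is not listed among the authors of $p^c_*$; $a^c_{v_c}$ does not author $p^c_{v_c}$; dummy agents $\tilde a^c_i$ only author $p^*$ and $p^{c'}_*$ for $c' \neq c$), so there are no review cycles of length $1$.

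The heart of the argument is ruling out review cycles of length $2$. For each assigned pair $a \to p$, I would enumerate the authors of $p$ and check that none of them reviews a paper that $a$ authors. The easy cases are handled by color separation: the reviewers of $p^*$, $p^c_*$ and of dummy papers $\tilde p^c_i$ are, respectively, $a^*$, $a^c_*$ and vertex agents of color $c$, and none of these agents author papers outside a controlled collection, so most cross-color interactions cannot close into a $2$-cycle. The only genuinely delicate case is the edge $a^c_{v_c} \to p^c_{v_c}$: among the authors of $p^c_{v_c}$ sits every $a^{c'}_{v'}$ with $v' \sim v_c$. For any such $v'$ of a different color $c' \neq c$, either $v' \neq v_{c'}$, in which case $a^{c'}_{v'}$ reviews a dummy paper that $a^c_{v_c}$ does not author, or $v' = v_{c'}$, in which case $a^{c'}_{v_{c'}}$ reviews $p^{c'}_{v_{c'}}$.

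Here is where the independent-set hypothesis is used: if $v' = v_{c'}$ were possible, then $v_c \sim v_{c'}$ would contradict the independence of $I$, so this case cannot occur. For the remaining possibility ($v' \neq v_{c'}$), one verifies that $a^c_{v_c}$ is not an author of the dummy paper that $a^{c'}_{v'}$ reviews; this is immediate since dummy papers are authored only by the corresponding special agent. The analogous checks for the edges $\tilde a^c_i \to p^c_v$ and the ``internal'' authors $a^c_{v''}$ ($v'' \in V^c$) of $p^c_{v_c}$ are mechanical, since those authors are routed to dummy papers and the assigned agents do not author dummy papers. I expect the main obstacle (and the only place where anything beyond bookkeeping is used) to be exactly this cross-color adjacency argument; once that is clean, the lemma follows.
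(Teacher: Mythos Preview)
Your proposal is correct and follows essentially the same approach as the paper: the review assignment you construct is identical to the one in the paper, and the verification that no $2$-cycle arises ultimately reduces to the same observation---the only potentially dangerous interaction is between two selected vertex agents $a^c_{v_c}$ and $a^{c'}_{v_{c'}}$, and this is ruled out precisely by the independence of $\{v_c\}_{c\in[k]}$. The paper organizes the case analysis slightly differently (first eliminating special agents, then $a^*$, then dummy agents from any possible cycle, leaving only pairs of selected vertex agents), whereas you proceed edge-by-edge, but the content is the same.
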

\begin{proof}
	Let $V'=\{w^1, \dots w^k\}\subseteq V$ be a independent set of size $k$ in the given \textsc{Multicolored Independent Set} instance $\mathcal{I}$ with $w^c\in V^c$ for $c\in [k]$. 
	From this we construct a solution for the constructed \cfpr instance $\mathcal{I}'$ as follows. Agent $a^*$ reviews paper $p^*$. 
	For $c\in [k]$, special agent $a^c_*$ reviews special paper $p^c_*$. 
	Vertex agents $\{a^c_{v'} \mid v'\in V^c\setminus \{w^c\}\}$ are assigned arbitrarily to dummy papers $\tilde{p}^c_1, \dots, \tilde{p}^c_{n+c-2}$. 
	Lastly, vertex agent $a^c_{w^c}$ reviews paper $p^c_{w^c}$ and the dummy agents from class $c$ are assigned arbitrarily to the remaining vertex papers from this class. 
	Note that by construction, the described assignment is $1$-$1$ valid. 
	Moreover, it is easy to verify that no agent reviews a paper authored by it so it remains to check for reviewing cycles of length two. 
	All special agents are only authors of papers from their color class but review papers authored solely by agents outside their color class. 
	Thus there exist no review cycles involving special agents. 
	All papers $a^*$ wrote are reviewed by special agents so $a^*$ cannot be part of a review cycle. 
	Dummy agents only write papers that are reviewed by special agents and $a^*$ so no dummy agent can be part of a review cycle. 
	Thus, every possible review cycle of length two needs to involve two vertex agents. 
	As no dummy paper is written by a vertex agent, the only vertex agents that review papers authored by other vertex agents are those assigned to vertex papers, i.e., agents $\{a^1_{w^1}, \dots a^k_{w^k}\}$. 
	Assume for the sake of contradiction that $a^i_{w_i}$ (which reviews paper $p^i_{w_i}$) forms a cycle with reviewer $a^{i'}_{w^{i'}}$ with $i\neq i' \in [k]$. 
	However, from this it follows by the definition of a review cycle that $a^{i'}_{w^{i'}}$ is an author of paper $p^i_{w_i}$, which implies that $\{w^i, w^{i'}\}\in E$ contradicting that $V'$ is an independent set. 
\end{proof}

We now turn to proving the backwards direction of the reduction. To do this, we first identify several assignments that need to be made in all solutions to the constructed \cfpr instance. We start by proving that $a^*$ needs to review $p^*$.  
\begin{lemma} \label{le:a*}
	In every $1$-$1$ valid $2$-cycle-free assignment in the constructed instance $\mathcal{I}'$, $a^*$ reviews $p^*$. 
\end{lemma}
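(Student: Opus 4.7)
The plan is to show that every agent other than $a^*$ is disqualified from reviewing $p^*$ in any $1$-$1$-valid $2$-cycle-free assignment. The first useful observation is that $|A|=|P|$ by construction (count: $k+1$ for $a^*_*$-special and $a^*$, plus $\sum_{c=1}^{k}(n+c-1)$ vertex items and $\sum_{c=1}^{k}(n+c-2)$ dummy items on both sides). Combined with $\ccr=\ddp=1$, this means that \emph{every} paper is reviewed by exactly one agent and every agent reviews exactly one paper. So $p^*$ must be reviewed by someone; I will go through the four kinds of agents.

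Vertex agents and dummy agents are immediately excluded: by construction, $p^*$ is authored by all of them, hence any such reviewer would be reviewing a paper they authored, violating $z$-cycle-freeness already at $z=1$. The only nontrivial case is that of a special agent $a^c_*$. Here I will argue by a counting obstruction: the authors of $p^*$ are exactly the vertex and dummy agents. Hence if $a^c_*$ reviewed $p^*$, then no vertex or dummy agent may review any paper authored by $a^c_*$ — otherwise together with $a^c_*$ and $p^*$ we obtain a length-$2$ review cycle. The papers authored by $a^c_*$ are all $n+c-1$ vertex papers $p^c_v$ of class $c$ plus all $n+c-2$ dummy papers $\tilde{p}^c_i$ of class $c$, totalling $2n+2c-3$ papers, and each of them must be reviewed (by exactly one agent each). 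The only agents that are neither vertex nor dummy agents are the $k$ special agents and $a^*$, i.e.\ $k+1$ agents in total. Since the wlog assumption $n>k$ gives $2n+2c-3 \ge 2n-1 > k+1$ for all $c\in[k]$, it is impossible to cover all of $a^c_*$'s authored papers with the allowed reviewers, a contradiction.

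Thus none of the vertex agents, dummy agents, or special agents can be the reviewer of $p^*$. Since every paper must be reviewed, the only remaining candidate is $a^*$, which establishes the lemma. I expect the bookkeeping in the counting step (ensuring that the bound $2n+2c-3 > k+1$ really follows from $n>k$ for all relevant $c$ and $k$) to be the only subtle point, but it is handled uniformly by the wlog assumption and requires no case distinction.
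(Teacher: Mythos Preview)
Your proof is correct and follows essentially the same approach as the paper's: rule out vertex/dummy agents as authors of~$p^*$, then derive a counting contradiction if a special agent~$a^c_*$ reviews~$p^*$. One minor slack worth noting: once~$a^c_*$ is assigned to~$p^*$ it cannot review anything else (and could not review its own papers anyway), so the pool of admissible reviewers for $a^c_*$'s papers is actually~$k$, not~$k+1$; the paper uses this tighter count together with only the~$n+c-1>k$ vertex papers, but your looser bound still yields the contradiction.
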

\begin{proof}
	Recall that all agents except all special agents and agent $a^*$ are authors of $p^*$.  
	So for the sake of contradiction let us assume that special agent $a_*^c$ for some $c\in [k]$ reviews $p^*$. 
	However, to prevent a reviewing cycle, this implies that only the remaining $k-1$ special agents and $a^*$  can review papers written by $a_*^c$. 
	However, as $a_*^c$ is an author of all vertex papers corresponding to vertices from $V^c$ and we have assumed that each set $V^c$ consists of more than $k$ vertices, these $k$ agents are not enough to review all papers written by $a_*^c$, a contradiction.
\end{proof}

We next prove that $a^c_*$ reviews $p^c_*$ for all $c\in [k]$. For this, we need the following lemma:
\begin{lemma} \label{le:ac*}
	In every $1$-$1$ valid $2$-cycle-free assignment in the constructed instance $\mathcal{I}'$, if $a^c_*$ reviews paper $p^{c'}_*$ for $c,c'\in [k]$, then only vertex and dummy agents from class $c'$ and special agents can review dummy and vertex papers from class $c$.
\end{lemma}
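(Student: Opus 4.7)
The plan is to argue directly from the construction by a contrapositive: if some agent $b$ reviews a paper from class $c$ (dummy or vertex), then to avoid a length-two review cycle with $a^c_*$ (who is assumed to review $p^{c'}_*$), $b$ must not author $p^{c'}_*$; reading off the list of authors of $p^{c'}_*$ will then force $b$ to lie in the claimed set.

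First I would recall the two construction facts needed. Every dummy paper $\tilde{p}^c_i$ for $i\in[n+c-2]$ and every vertex paper $p^c_v$ for $v\in V^c$ has $a^c_*$ among its authors. Moreover, the authors of the special paper $p^{c'}_*$ are exactly the vertex and dummy agents from all classes $c''\neq c'$ together with $a^*$. These two observations are immediate from the construction and fix the ``forbidden'' set for potential reviewers.

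Next, assume $b\in A$ reviews some paper $q$ from class $c$ (either a $\tilde{p}^c_i$ or a $p^c_v$). Then $(q,a^c_*)\in E_P$ and $(b,q)$ is in the assignment, while by hypothesis $(a^c_*,p^{c'}_*)$ is also in the assignment. If in addition $(p^{c'}_*,b)\in E_P$, i.e.\ $b$ authors $p^{c'}_*$, then the pair $(q, a^c_*)$ and $(p^{c'}_*, b)$ together with the two review edges gives a review cycle of length two, contradicting $2$-cycle-freeness. Hence $b$ is not an author of $p^{c'}_*$.

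Finally I would just combine this with the explicit author list: since the non-authors of $p^{c'}_*$ are precisely the vertex and dummy agents of class $c'$ together with all the special agents $a^1_*,\dots,a^k_*$ (everyone else either authors $p^{c'}_*$ or is $a^*$), the agent $b$ must lie in this set, which is exactly the conclusion of the lemma. I do not foresee any real obstacle here; the argument is a direct bookkeeping of which agents are authors of $p^{c'}_*$ versus authors of the class-$c$ papers, and the only care needed is to remember that $a^*$ is an author of every $p^{c'}_*$ and must therefore be excluded from reviewing class-$c$ papers once $a^c_*$ is committed to $p^{c'}_*$.
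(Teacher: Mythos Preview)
Your proposal is correct and follows essentially the same approach as the paper: both argue that since $a^c_*$ authors every dummy and vertex paper of class~$c$, any reviewer $b$ of such a paper would create a length-two cycle with $a^c_*$ if $b$ were an author of $p^{c'}_*$, and then read off the non-authors of $p^{c'}_*$ from the construction. Your write-up is slightly more explicit (contrapositive framing, naming the cycle edges), but the logical content is identical; the only cosmetic wrinkle is your parenthetical ``or is $a^*$'', which is redundant since $a^*$ already authors $p^{c'}_*$.
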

\begin{proof}
	Note that the special agent $a^c_*$ is an author of all dummy and vertex papers from color class $c$. 
	Moreover, paper $p^{c'}_*$ is authored by all dummy and vertex agents from color classes different from $c'$. 
	Thus, if $a^c_*$ reviews $p^{c'}_*$, then no vertex or dummy agent from a class different from $c'$ can review papers written by $a^c_*$. 
	As $a^c_*$ authors all dummy and vertex papers from class $c$, the lemma follows.
\end{proof}

Using this, we are able to prove that each special agent reviews the corresponding special paper.

\begin{lemma} \label{le:ac*-2}
	In every $1$-$1$ valid $2$-cycle-free assignment in the constructed instance $\mathcal{I}'$, for $c\in [k]$, $a^c_*$ reviews $p^c_*$. 
\end{lemma}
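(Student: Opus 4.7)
The plan is to establish the lemma in two steps: first show that the review pattern between special agents and special papers must be a permutation of $[k]$, and then show that this permutation is the identity.

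For the first step, I use that $|A|=|P|$ together with $\ccr=\ddp=1$ makes the assignment a perfect matching, and recall from \cref{le:a*} that $a^*$ reviews $p^*$. I then claim that no vertex or dummy agent can review any special paper $p^c_*$: every vertex or dummy agent $a$ is an author of $p^*$, and $a^*$ is an author of $p^c_*$, so $a$ reviewing $p^c_*$ together with $a^*$ reviewing $p^*$ would form a length-two review cycle. Consequently, each of the $k$ special papers must be reviewed by a special agent, and since there are exactly $k$ of each and every agent reviews exactly one paper, there is a permutation $\pi\colon[k]\to[k]$ such that $a^c_*$ reviews $p^{\pi(c)}_*$.

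For the second step, suppose toward contradiction that $\pi$ is not the identity, and pick a nontrivial orbit $(c_1,\ldots,c_m)$ of $\pi$ with $m\ge 2$, where $a^{c_i}_*$ reviews $p^{c_{i+1}}_*$ (setting $c_{m+1}=c_1$). By \cref{le:ac*}, for each $i$ the dummy and vertex papers of class $c_i$ can be reviewed only by dummy or vertex agents of class $c_{i+1}$ and by special agents. But the previous step already commits every special agent to a special paper, so the $2n+2c_i-3$ dummy and vertex papers of class $c_i$ must be covered entirely by the $2n+2c_{i+1}-3$ dummy and vertex agents of class $c_{i+1}$. Since each agent reviews at most one paper, this forces $c_i\le c_{i+1}$. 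Chaining around the orbit yields $c_1\le c_2\le\cdots\le c_m\le c_1$, so all $c_i$ are equal, contradicting $m\ge 2$.

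The main obstacle I expect is spotting the length-two cycle through $a^*$ that excludes every vertex and dummy reviewer from special papers; once that observation is in hand, the rest is pigeonhole backed by the strict monotonicity of $2n+2c-3$ in $c$, which is exactly the reason the construction was set up with color classes of slightly different sizes.
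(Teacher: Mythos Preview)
Your proof is correct and follows essentially the same approach as the paper's own argument. The first step is identical in substance: both use that $a^*$ reviews $p^*$ (from \cref{le:a*}) together with the authorship structure to conclude that only special agents may review special papers, hence the special-to-special assignment is a permutation. For the second step the paper simply picks a single pair with $a^c_*$ reviewing $p^{c'}_*$ where $c'<c$ (observing that if the permutation is not the identity such a pair must exist) and derives the counting contradiction directly from \cref{le:ac*}; your orbit-chaining variant reaches the same contradiction via $c_1\le c_2\le\cdots\le c_m\le c_1$, which is a cosmetic repackaging of the same pigeonhole.
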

\begin{proof}
	By \Cref{le:a*}, $a^*$ is assigned to $p^*$, which is authored by all dummy agents and vertex agents. 
	Thus, to prevent the existence of reviewing cycles, only special agents can review papers written by $a^*$. 
	As for each $c\in [k]$, $p^c_*$ is written by $a^*$, it follows that the set of $k$ agents $\{a_*^c\mid c\in [k]\}$ needs to review the set of $k$ papers $\{p_*^c\mid c\in [k]\}$. 
	For the sake of contradiction, let us assume that special agent $a^c_*$ reviews paper $p^{c'}_*$ for $c\neq c'\in [k]$. 
	We assume without loss of generality that $c'<c$ (if there exists a pair where $a^{\tilde{c}}_*$ reviews paper $p^{\tilde{c}'}_*$ with $\tilde{c}<\tilde{c}'$ there also has to exist one with $c'<c$).
	By \Cref{le:ac*} and as special agents need to review special papers, from this it follows that only dummy and vertex agents from color $c'$ can review the vertex and dummy agents from class $c$ (which are all written by $a^c_*$). 
	As we have assumed that $c'<c$, the number of these agents ($2n+2c'-3$) does not suffices to review all of these papers ($2n+2c-3$), a contradiction. 
\end{proof}

We are now ready to prove the correctness of the backwards direction of the reduction:
\begin{lemma}\label{le:bd}
	If the constructed instance $\mathcal{I}'$ of \cfpr  is a YES-instance, then the given instance $\mathcal{I}$ of \textsc{Multicolored Independent Set} is a YES-instance.
\end{lemma}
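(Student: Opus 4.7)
My plan is to combine Lemmas~\ref{le:a*}--\ref{le:ac*-2} to force the structure of any $1$-$1$-valid $2$-cycle-free assignment on the constructed instance, and then to read off a multicolored independent set from the resulting class-internal matchings.

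From the three preceding lemmas, $a^*$ reviews $p^*$ and each special agent $a^c_*$ reviews the corresponding special paper $p^c_*$. Applying Lemma~\ref{le:ac*} with $c' = c$ and noting that all special agents are now used, the $2n+2c-3$ non-special papers of class $c$ must be reviewed exactly by the $2n+2c-3$ non-special agents of class $c$. Zooming in on such a class-internal matching, observe that by construction $a^c_u$ is an author of $p^c_v$ whenever $v \ne u \in V^c$, so the prohibition of $1$-cycles forces each vertex agent of class $c$ to review either its own-indexed paper $p^c_u$ or a dummy paper. Since class $c$ contains $n+c-1$ vertex agents but only $n+c-2$ dummy papers, pigeonhole guarantees that the set $S_c := \{v \in V^c : a^c_v \text{ reviews } p^c_v\}$ is nonempty.

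I then use the 2-cycle prohibition to prove $|S_c| = 1$ for every $c$ and that the resulting vertices $w^c \in S_c$ are pairwise non-adjacent. If $v \ne v'$ both lie in $S_c$, then $a^c_v$ is an author of $p^c_{v'}$ and $a^c_{v'}$ is an author of $p^c_v$, so together with the two reviews we obtain a review cycle of length two. If instead $v \in S_c$ and $v' \in S_{c'}$ with $c \ne c'$ and $\{v,v'\} \in E$, the construction makes $a^c_v$ an author of $p^{c'}_{v'}$ and $a^{c'}_{v'}$ an author of $p^c_v$; combined with the reviews $a^c_v \to p^c_v$ and $a^{c'}_{v'} \to p^{c'}_{v'}$, this again yields a 2-cycle. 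Both contradict 2-cycle-freeness, so $\{w^1, \dots, w^k\}$ is the desired multicolored independent set.

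The main bookkeeping obstacle is ruling out every other potential 2-cycle before the argument above can be invoked. I would handle this by short case analysis: reviews of dummy papers contribute nothing, because their only author $a^c_*$ is already matched to $p^c_*$; dummy agents $\tilde a^c_i$ author only $p^*$ and $p^{c''}_*$ for $c'' \ne c$, whose assigned reviewers $a^*$ and $a^{c''}_*$ never review any class-$c$ vertex or dummy paper by Lemma~\ref{le:ac*-2}, so no dummy agent can sit in a 2-cycle; and the forced reviews performed by $a^*$ and the special agents are symmetrically isolated from all remaining cross-edges. Once these sanity checks are in place, every remaining candidate 2-cycle is of the form analyzed in the previous paragraph, and the extraction of the independent set is immediate.
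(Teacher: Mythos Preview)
Your first two paragraphs are correct and match the paper's proof almost exactly: the paper likewise invokes Lemmas~\ref{le:a*}--\ref{le:ac*-2} to force the class-internal matching, uses pigeonhole (phrased dually, via ``$n+c-2$ dummy agents versus $n+c-1$ vertex papers'') to find in each class some $w_c$ with $a^c_{w_c}$ reviewing $p^c_{w_c}$, and then derives a $2$-cycle from any edge $\{w_c,w_{c'}\}\in E$. The paper does not prove $|S_c|=1$; it simply picks one representative per class, which suffices---your uniqueness argument is correct but superfluous.

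Your third paragraph, however, is unnecessary and betrays a confusion about the direction of the implication. You are \emph{given} a $2$-cycle-free assignment; there is nothing to ``rule out''. The reasoning in your second paragraph is already complete: it says ``if the chosen vertices were adjacent (or if $|S_c|>1$), then a specific $2$-cycle would exist, contradicting the hypothesis''. That contrapositive does not depend on any classification of where else $2$-cycles could conceivably arise. The case analysis you sketch---checking that dummy agents, dummy papers, $a^*$, and special agents cannot sit in $2$-cycles---is precisely the work of the \emph{forward} direction (Lemma~\ref{le:fd}), where one must verify that a constructed assignment is $2$-cycle-free. Drop the third paragraph entirely.
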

\begin{proof}
	From \Cref{le:a*}, \Cref{le:ac*}, and \Cref{le:ac*-2} it follows that for each color $c\in [k]$ every vertex and dummy agent from this color class needs to review a vertex or dummy paper from this color class and that each vertex or dummy paper from this color class needs to be reviewed by a vertex or dummy agent from this color class. 
	As there exist $n+c-2$ dummy agents from color class $c$ but $n+c-1$ vertex papers at least one  vertex paper from color class $c$ needs to be reviewed by a vertex agent from color class $c$. 
	Note that for each $v\in V^c$, agent $a_v^c$ is an author of all vertex papers except $p_v^c$. 
	Thus, for each color $c\in [k]$ there needs to exist (at least) one agent $a_{w_c}^c$ for some $w_c\in V^c$ that reviews $p_{w_c}^c$. 
	So let $a_{w_1}^1,\cdots, a_{w_k}^k$ be a list of those agents (containing one vertex agent from each color class). 
	We claim that $\{w_1,\dots , w_k\}$ forms an independent set in $G$. 
	For the sake of contradiction assume that $\{w_c,w_{c'}\}\in E$ for $c\neq c'\in [k]$, then by construction it follows that $a_{w_{c}}^c$ who reviews paper $p_{w_{c}}^c$ is an author of paper $p_{w_{c'}}^{c'}$ and similarly $a_{w_{c'}}^{c'}$ who reviews $p_{w_{c'}}^{c'}$ is an author of $p_{w_{c}}^c$. 
	Thus, $a_{w_{c}}^c$ and $a_{w_{c'}}^{c'}$ form a reviewing cycle, a contradiction.
\end{proof}

From \Cref{le:fd} and \Cref{le:bd}, \Cref{th:full} directly follows:
\begin{restatable}{theorem}{full}
	\label{th:full}
	\cfpr is NP-hard even if each agent is qualified to review all papers, $n_A=n_P$, $\ccr=\ddp=1$, and $z=2$.
\end{restatable}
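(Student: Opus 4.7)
The plan is to assemble \Cref{th:full} by combining the reduction from \textsc{Multicolored Independent Set} described above with the lemmas already proved in this subsection. The construction runs in polynomial time, and \Cref{le:fd} together with \Cref{le:bd} establish correctness in both directions, so NP-hardness of \cfpr under the stated restrictions follows immediately from the NP-hardness of \textsc{Multicolored Independent Set}.

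The only remaining task is to verify that the constructed instance $\mathcal{I}'$ meets every restriction in the theorem. The qualification condition (every agent is qualified to review every paper) is imposed by construction; $\ccr = \ddp = 1$ and $z = 2$ are set directly; and $n_A = n_P$ follows by a short count, since each color class $c$ contributes $1$ special vertex, $n + c - 1$ vertex vertices, and $n + c - 2$ dummy vertices on both the agent and the paper side, and the extra pair $a^*, p^*$ is symmetric. Thus the entire difficulty is already packaged inside the structural lemmas.

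The substantive work lies in the backward direction, and the main obstacle in designing the reduction is to engineer the gadget so that almost the entire assignment is forced, leaving only a $k$-vertex choice that encodes the independent set. The two tricks that make this work are: (i) the staggered class sizes $|V^c| = n + c - 1$, which ensures that any deviation $a^c_* \mapsto p^{c'}_*$ with $c' < c$ would leave class-$c'$ agents too few in number to cover all non-special papers of class $c$ (a shortfall of $2(c - c')$); and (ii) the asymmetry of exactly one more vertex paper than dummy agent per class, which forces at least one vertex agent per class to review a vertex paper in the same class. Once $a^*$ is pinned to $p^*$ (\Cref{le:a*}) and each $a^c_*$ is pinned to $p^c_*$ (\Cref{le:ac*,le:ac*-2}), these two observations reduce the remaining freedom to picking one vertex $w^c \in V^c$ per class, and the cycle-freeness constraint at length two translates precisely to non-adjacency of the chosen $w^c$'s in $G$ (\Cref{le:bd}).

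Hence the proof of \Cref{th:full} reduces to citing \Cref{le:fd,le:bd} and noting that the parameter bounds hold by construction; no further argument is required.
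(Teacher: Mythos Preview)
Your proposal is correct and follows exactly the paper's approach: the paper's proof of \Cref{th:full} is the single sentence ``From \Cref{le:fd} and \Cref{le:bd}, \Cref{th:full} directly follows,'' and you do precisely this, citing \Cref{le:fd,le:bd} (and the supporting \Cref{le:a*,le:ac*,le:ac*-2}) while additionally spelling out the count verifying $n_A=n_P$ and the intuition behind the gadget design. Nothing is missing and nothing deviates from the paper's argument.
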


The reduction from \Cref{th:full} heavily relies on the possibility that an agent reviews a paper written by an agent with whom she has a joint paper. 
As some conferences might declare an automatic conflict of interest for co-authors, we now consider the case where an agent is qualified to review all papers that are not authored by one of her co-authors:

\begin{restatable}{theorem}{cof}
	\label{th:cof}
	 \cfpr is NP-hard even if each agent is qualified to review all papers that are not written by one of her co-authors, $\ccr=\ddp=1$, and $z=2$. 
\end{restatable}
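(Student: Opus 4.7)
The plan is to reduce again from \textsc{Multicolored Independent Set}, adapting the construction of \cref{th:full} so that the co-author rule is respected. The construction of \cref{th:full} cannot be reused verbatim: the agent $a^*$ there is an author of every special paper $p^c_*$ and therefore becomes a co-author of every vertex and dummy agent; under the new qualification rule $a^*$ is then forbidden to review $p^*$, and \cref{le:a*} breaks down. So the first task is to localize the global forcing mechanism around $(a^*,p^*)$ so that no single agent accumulates co-authorships with the entire ``vertex/dummy layer''.

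Concretely, I would replace $(a^*,p^*)$ by one dedicated pair $(a^*_c,p^*_c)$ per color class $c\in[k]$. The authorship of the special papers $p^c_*$ and of the new papers $p^*_c$ would be redesigned so that (i)~each $a^*_c$ is a co-author only of agents lying in its own color class $c$, hence is permitted to review $p^*_c$, while (ii)~the forcing chain of \cref{le:a*,le:ac*,le:ac*-2} still works locally, i.e.\ a counting argument inside color class $c$ rules out any $2$-cycle-free assignment in which $a^c_*$ does not review $p^c_*$ (one would still exploit that the vertex/dummy agents of class $c$ would otherwise outnumber the feasible reviewers of $a^c_*$'s papers once the cycle and co-author constraints are imposed). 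With the special agents pinned down, the rest of the argument of \cref{le:bd} carries over unchanged: inside each color class $c$, the $n+c-2$ dummy agents must review $n+c-2$ of the $n+c-1$ vertex papers, leaving exactly one vertex agent $a^c_{w^c}$ reviewing its own class's vertex paper $p^c_{w^c}$, and a $2$-cycle between $a^c_{w^c}$ and $a^{c'}_{w^{c'}}$ arises iff $\{w^c,w^{c'}\}\in E$, yielding the required independent-set correspondence.

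To close the reduction I would verify both directions: given a multicolored independent set, exhibit an assignment analogous to that of \cref{le:fd} and check by inspection that each forced review satisfies the co-author rule (this is where the locality of the new gadget is used); conversely, adapt \cref{le:a*,le:ac*,le:ac*-2,le:bd} to the new gadget, replacing cycle-based blocking by cycle-or-co-author-based blocking wherever convenient. The main obstacle is engineering the co-authorship graph so that every intended forced review is simultaneously (a)~allowed under the co-author rule and (b)~the unique option left by $1$-$1$-validity and $2$-cycle-freeness, while every other candidate review is forbidden by either a co-author conflict or an incipient $2$-cycle; this is essentially a careful bookkeeping task, but keeping all four constraints ($\ccr=\ddp=1$, $z=2$, co-author rule, and the correct count per color class) mutually consistent is the delicate point.
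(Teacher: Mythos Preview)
Your plan is not the paper's approach, and it has a gap more serious than the one you diagnosed. The paper does not adapt the \textsc{Multicolored Independent Set} reduction of \cref{th:full} at all; it reduces instead from the restricted \cfpr variant of \cref{th:cfpr-sparse} (where agents already cannot review co-authors' papers). The trick is to encode an arbitrary disqualification ``$a$ may not review $p$'' as a co-author conflict: for every pair $(a,p)$ one introduces a shadow agent $a_p$ who shares a fresh two-author paper with $a$, and one adds $a_p$ to the author list of $p$ precisely when $a$ was not qualified to review $p$ in the source instance. Four further agents $w,x,y,z$, agents $b_p$ for $p\in P$, and $n_A n_P$ dummy agents, together with four auxiliary papers $q,q',r,r'$, are engineered so that in any $1$-$1$-valid $2$-cycle-free assignment every newly introduced agent is barred (by the co-author rule alone) from reviewing any original paper; the restriction of a solution to $A\times P$ then solves the source instance. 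This is modular and avoids any global forcing gadget.

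The gap in your sketch: you correctly observe that $a^*$ becomes a universal co-author, but the \emph{special} agents $a^c_*$ already suffer the same fate, and your fix does not touch them. In \cref{th:full}, $a^c_*$ is an author of every vertex paper $p^c_v$, and the edge encoding makes each $a^{c'}_{v'}$ with $\{v,v'\}\in E$ a further author of $p^c_v$. Hence $a^c_*$ is a co-author of every vertex agent adjacent to some vertex of $V^c$; but precisely these agents are among the authors of $p^c_*$, so under the co-author rule $a^c_*$ is \emph{forbidden} to review $p^c_*$, and the forcing of \cref{le:ac*-2} collapses. Localising $(a^*,p^*)$ per colour class does nothing here, because the cross-colour co-authorships live in the vertex papers (the edge encoding), not in the special or starred papers you propose to redesign. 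Salvaging the MIS route would require reworking the edge encoding itself so that no agent who is later forced to review a paper with out-of-colour authors ever co-authors with such an agent; that is the real obstacle, and your proposal does not address it.
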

\begin{proof}
	We reduce from \cfpr with $\ccr=\ddp=1$, and $z=2$  where agents are not qualified to review papers of co-authors, which is NP-hard as proven in \Cref{th:cfpr-sparse}. We assume without loss of generality that for each paper there is one agent who is not qualified to review it.
	
	\paragraph{Construction.} Given an instance $\mathcal{I}=((A\cupdot P,E_A\cupdot E_P),\ccr=1,\ddp=1,z=2)$ of \cfpr, we construct a new instance $\mathcal{I}'$ with agents $A'$ and papers $P'$ and $\ccr'=\ddp'=1$ and $z'=2$. We start by setting $A'=A$. Next, we add agents $w$, $x$, $y$, and $z$ to $A'$. For each agent $a\in A$ and each paper $p\in P$, we insert an agent $a_p$ to $A'$ and add a so called \emph{agent} paper which is authored by $a$ and $a_p$ to $P'$. For each paper $p\in P$, we introduce an agent $b_p$ to $A'$. Moreover, we introduce $n_A\cdot n_P$ dummy agents $d_1,\dots, d_{n_A\cdot n_P}$ to $A'$. 
	We introduce five different (types of) papers in $P'$: 
	\begin{itemize}
		\item For each paper $p\in P$, we introduce a paper $p$ to $P'$ that is written by all authors of $p$, agent $b_p$ and by agents $a_p$ for all agents $a\in A$ that are not qualified to review $p$ in $\mathcal{I}$.  
		\item We introduce a paper $q$ authored by $w$ and $a_p$ for all $a\in A$ and $p\in P$.
		\item We introduce a paper $q'$ authored by $x$ and $a_p$ for all $a\in A$ and $p\in P$ to $P'$. 
		\item We introduce a paper $r$ authored by $y$ and all dummy agents $d_1,\cdots, d_{n_A\cdot n_P}$ and $b_p$ for each $p\in P$.
		\item  We introduce a paper $r'$ authored by $z$ and all dummy agents $d_1,\cdots, d_{n_A\cdot n_P}$ and $b_p$ for each $p\in P$.
	\end{itemize}
	Each agent is qualified to review all papers that are not written by one of her co-authors.
	
	{\bfseries ($\Rightarrow$)} 
	Given a $1$-$1$-valid $2$-cycle-free review assignment for $\mathcal{I}$, we construct a $1$-$1$-valid $2$-cycle-free review assignment for $\mathcal{I}'$ as follows. 
	All agents from $A$ still review the same papers as in the given assignment (which are all still qualified to do so because we have not added or removed any papers with two authors from $A$ apart from copies of papers from $P$). 
	Agent $w$ reviews $r$, agent $x$ reviews $r'$, $y$ reviews $q'$, and $z$ reviews $q$ (which are all are qualified to do so). Moreover, the dummy agents are assigned arbitrarily to the agent papers, which they are qualified to review because dummy agents only author papers together with agents $\{b_p\mid p\in P\}$ and $y$ and $z$. 
	
	Concerning review cycles, note that agents $\{a_p, b_p\mid a\in A,$ $ p\in P\}$ do not review any paper. Moreover dummy agents only review papers written by agents  $\{a_p, a \mid a\in A,$  $p\in P\}$ but are only reviewed by $x$ and $w$ and thus cannot be part of a review cycle. Moreover, also no agent from $A$ can be part of a review cycle because there was no such cycle in the given review assignment and all agents that author a paper reviewed by an agent from $A$ are not part of a review cycle. Thus, any review cycle needs to consists of $w$, $x$, $y$, and $z$. Note that $w$ reviews a paper of $y$, $y$ reviews a paper of $x$, $x$ reviews a paper of $z$, and $z$ reviews a paper of $w$. Thus, these agents form a $4$-cycle but no $2$-cycle.  
	
	{\bfseries ($\Leftarrow$)} 
	Given a $1$-$1$-valid $2$-cycle-free review assignment for $\mathcal{I'}$, we claim that this assignment restricted to agents from $A$ and papers from $P$ is a solution to the given instance $\mathcal{I}$. To prove this, we will argue for all agents from $A'\setminus A$ that they cannot review a paper from $P$ from which the correctness directly follows, as we have not added any authors from $A$ to papers from $P$. Fix some paper $p'\in P$. We now iterate over all agents $A'\setminus A$ and argue why they cannot review $p'$. As we have assumed in $\mathcal{I}$ that for all papers there is an agent not qualified to review it, it follows that $p'$ has an author ${a^*}_{p'}$ for some $a^*\in A$ and author $b_{p'}$ in $\mathcal{I'}$. For agent $w$ and $x$ it holds that both have a joint paper with ${a^*}_{p'}$ and thus cannot review $p'$. Next, note that as, for each $a\in A$ and $p\in P$, $a_p$ is either identical to ${a^*}_{p'}$ or has a joint paper with ${a^*}_{p'}$ none of these agents can review $p'$. Lastly, $d_i$ for some $i\in [n_A\cdot n_P]$, $b_p$ for $p\in P \setminus \{p'\}$, and $y$ and $z$ have a joint paper with $b_{p'}$, which is an author of $p'$. Thus, all these agents (and thereby no agent from $A'\setminus A$) can review $p'$. 
	
\end{proof}

\subsection{Single-Author-Single-Paper Setting}
In their theoretical analysis, \citet{DBLP:conf/atal/BarrotLPS20} focus on \cfpr where each agent writes a single-author paper (we speak of an agent and its paper interchangeably) and qualifications are symmetric, i.e., if an agent $a$ is qualified to review agent $b$, then $b$ is qualified to review $a$. 
They prove that this problem is NP-hard for $\ccr=\ddp=1$ and $z=5$ (without bounds on $\Delta_A^+$ or $\Delta_P^-$) but polynomial-time solvable for arbitrary $\ccr=\ddp$ for $z=2$. 
We close the gap between these two results and extend their general picture by proving that for $\ccr=\ddp=2$, \cfpr is NP-hard for $z=3$ even if qualifications are symmetric and each agent is only qualified to review four agents, i.e., we need to decide for each agent $a$ which two of these four agents review $a$ and which two of these agents will get a review from~$a$.
\begin{restatable}{theorem}{otocfprNP}
	\label{th:otocfprNP}
	\cfpr is NP-hard, even if $z=3$,  $\ccr=\ddp=2$, $\Delta_A^-=\Delta_P^+=1$, $n_A=n_P$, each agent is qualified to review exactly four papers and if an agent $a$ can review the paper written by agent $b$, then $b$ can review the paper of $a$.
\end{restatable}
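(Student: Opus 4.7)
The plan is to reduce from a suitably restricted NP-hard satisfaction problem (such as a bounded-occurrence 3-SAT or NAE-3-SAT variant), after first translating \cfpr in the stated setting into a clean purely combinatorial orientation problem.

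First I would argue the reformulation. In the single-author-single-paper setting with symmetric qualifications and every agent qualified to review exactly four others, the qualification relation defines an undirected $4$-regular graph $G$ on the agents. A $\ccr$-$\ddp$-valid assignment with $\ccr=\ddp=2$ selects, for each vertex $a$, two outgoing arcs (papers $a$ reviews) and two incoming arcs (reviewers of $a$'s paper) among its four incident edges, giving $4$ arc-uses at every vertex. Forbidding review cycles of length~$2$ rules out using an undirected edge in both directions simultaneously, so each edge of $G$ is oriented in exactly one direction, yielding an Eulerian $2$-in-$2$-out orientation of $G$. Since each agent authors a single paper, a review cycle of length~$k$ corresponds to a directed $k$-cycle in the oriented $G$, and $z=3$ therefore amounts to forbidding directed triangles (length-$1$ cycles are excluded since agents are not qualified to review their own papers). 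Hence \cfpr in this regime is equivalent to: given a $4$-regular undirected graph $G$, does $G$ admit a $2$-in-$2$-out orientation with no directed triangle?

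Second, I would build three kinds of gadgets within this orientation framework. A \emph{variable gadget} is a small $4$-regular subgraph with exactly two triangle-free $2$-in-$2$-out orientations, swapped by reversing a distinguished \emph{truth edge}. A \emph{wire} propagates the orientation of a truth edge along a chain of triangles, so that fixing one edge forces the orientation of a far-away edge of the chain. A \emph{clause gadget} for a $3$-literal clause is a $4$-regular subgraph admitting a triangle-free $2$-in-$2$-out orientation iff at least one of its three incoming literal edges carries the ``satisfying'' orientation. Because the overall graph must be exactly $4$-regular and symmetric, a few auxiliary \emph{anchor} vertices absorb leftover half-edges at gadget boundaries without creating unintended triangles, and unused edges of a gadget are paired off with neutral edges that carry no logical meaning.

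The main obstacle will be designing these gadgets so they are simultaneously (i) exactly $4$-regular with symmetric edges, (ii) compositional, i.e., gluing preserves $4$-regularity everywhere, and (iii) \emph{faithful}, meaning the triangle constraints force precisely the intended Boolean semantics and nothing more. Each vertex has only $\binom{4}{2}=6$ local in/out patterns, so a careful case analysis on small building blocks is feasible, but arranging triangles so that, e.g., a wire segment has a unique forced orientation while still avoiding spurious directed triangles elsewhere is the delicate part. Once the gadgets are fixed, both directions of the correctness proof are routine: from a satisfying assignment one orients variable and wire edges consistently and checks each clause gadget locally, and from a triangle-free $2$-in-$2$-out orientation one reads off the values at the truth edges and verifies that each clause gadget's local constraints force at least one satisfying literal.
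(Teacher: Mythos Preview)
Your reformulation is correct and is exactly the right way to read the problem: in the stated regime a $2$-$2$-valid $3$-cycle-free assignment is precisely a $2$-in-$2$-out orientation of a $4$-regular undirected graph with no directed triangle. The paper works in this picture implicitly.

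Where your proposal has a genuine gap is that the entire weight of the argument lies in the gadget design, and you have not given any gadgets. You say yourself that this is ``the main obstacle'' and ``the delicate part,'' but as written there is no construction to check, so there is no proof. In particular, your choice of a $3$-literal clause semantics (``at least one literal true'') sits badly with the local $2$-in-$2$-out constraint at a degree-$4$ vertex, which naturally encodes ``exactly two of four.'' You propose to patch the degree mismatch with auxiliary anchor vertices and extra edges, but you would then have to argue that these additions do not create new triangles or new degrees of freedom in the orientation; none of this is carried out.

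The paper avoids all of this by choosing the source problem to match the local constraint: it first proves NP-hardness of \textsc{Two-in-Four-Satisfiability} in which every clause has four literals, exactly two must be satisfied, and every variable occurs exactly twice positively and twice negatively. With this source, each clause becomes a single vertex~$b_j$ whose four neighbours are the four literal-agents; the $2$-in-$2$-out constraint at~$b_j$ \emph{is} the clause. Each variable~$x_i$ becomes a tiny $4$-vertex gadget $\{a_i^{\text{pos}},a_i^{\text{neg}},a_i^1,a_i^2\}$; the agents $a_i^1,a_i^2$ are linked into a global cycle (giving them degree~$4$), and the only triangles in the whole graph are $\{a_i^{\text{pos}},a_i^1,a_i^2\}$ and $\{a_i^{\text{neg}},a_i^1,a_i^2\}$. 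Forbidding directed triangles then forces, for each~$i$, that $a_i^1,a_i^2$ both point to $a_i^{\text{pos}}$ or both to $a_i^{\text{neg}}$, which is the variable choice. No wires, no anchors, and the clause gadget is a single vertex.

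So your high-level plan is sound, but the key idea you are missing is to pick the satisfiability variant so that ``exactly two of four'' at a vertex already \emph{is} the clause constraint; once you do that, the gadgets become almost trivial, whereas with a $3$-literal source they remain to be invented.
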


To prove \Cref{th:otocfprNP}, we reduce from \textsc{Two-in-Four-Satisfiability}, a variant of \textsc{Satisfiability}, where given a propositional formula $\varphi$ over variables $X$ where each clause contains four different literals, the question is whether there exists an assignment $\alpha$ of variables $X$ such that in each clause exactly two out of four literals are satisfied. As to the best of our knowledge, this variant of \textsc{Satisfiability} has not been considered before, we start by proving that it is NP-hard even if each literal appears exactly twice positive and twice negative: 
\begin{proposition}
	\textsc{Two-in-Four-Satisfiability} is NP-hard, even if each variable appears exactly twice positive and twice negative. 
\end{proposition}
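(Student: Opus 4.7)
The plan is to reduce from the unrestricted version of \textsc{Two-in-Four-Satisfiability} (where variables may occur arbitrarily often and with arbitrary polarities), whose NP-hardness is standard: it follows, e.g., from Schaefer's dichotomy theorem, or from a direct reduction from \textsc{One-in-Three-SAT} by padding each clause with a fresh ``slack'' literal and then forcing the slacks. Given such an instance $\varphi$, I would construct in polynomial time an equivalent instance $\varphi'$ in which every variable occurs exactly twice positively and twice negatively.

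The reduction has two phases. First, I \emph{split} every occurrence of every variable into a fresh copy: if $x$ appears $m$ times in $\varphi$, introduce fresh copies $x^{1},\ldots,x^{m}$ and replace the $i$-th occurrence of $x$ by $x^{i}$ or $\bar{x}^{i}$ according to its polarity. After this phase, every variable of $\varphi'$ appears exactly once in the clauses inherited from $\varphi$. Second, for each group $\{x^1,\ldots,x^m\}$ of copies, I add a set of fresh 2-in-4 clauses that simultaneously (a) enforce $x^1 = x^2 = \cdots = x^m$, so that a satisfying assignment of $\varphi'$ induces a consistent assignment of the original variables of $\varphi$, and (b) top up every variable's occurrence count to exactly two positive and two negative.

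The logical core of the equivalence gadget is a clause of the form $(x^{i} \vee \bar{x}^{i+1} \vee u_i \vee \bar{w}_i)$, where the auxiliaries $u_i, w_i$ are forced by companion clauses to be equal; in 2-in-4 semantics this forces $x^{i}$ and $\bar{x}^{i+1}$ to have opposite truth values, i.e.\ $x^i = x^{i+1}$. Arranging the copies along a cycle $x^1 \to x^2 \to \cdots \to x^m \to x^1$ and attaching one such gadget per edge propagates equivalence around the cycle. Whenever after this step a variable still has unbalanced polarity counts, I compensate using further 2-in-4 clauses among groups of fresh auxiliaries whose only purpose is to absorb the residual imbalance while being trivially satisfiable (e.g.\ clauses on four independent fresh variables with no interaction elsewhere).

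The main obstacle will be the careful bookkeeping needed so that every variable---original copies as well as every auxiliary introduced by either the equivalence gadget or the balancing clauses---ends up with exactly two positive and two negative occurrences. This is a finite, local combinatorial design problem rather than a deep technical difficulty: one tunes gadget parameters and chains extra balancing clauses until all counts match. Correctness of the overall reduction is then routine: a satisfying assignment of $\varphi$ lifts to $\varphi'$ by copying $x$'s value onto all $x^{i}$ and fixing auxiliaries consistently, while any satisfying assignment of $\varphi'$ restricts back to a satisfying assignment of $\varphi$ thanks to the enforced equality of the copies.
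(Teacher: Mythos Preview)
Your plan is sound and follows the same template as the paper's proof: split each occurrence of a variable into its own fresh copy, chain the copies with 2-in-4 equality gadgets, and then balance polarity counts. Two differences are worth pointing out.

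First, the paper reduces from \textsc{Monotone Not-All-Equal 3-SAT} restricted to exactly four occurrences per variable, not from unrestricted \textsc{Two-in-Four-SAT}. This choice makes the bookkeeping you flag as the ``main obstacle'' essentially vanish. Each NAE clause $w\vee x\vee y$ is translated into the pair $w_i\vee x_i\vee y_i\vee z_i$ and $\overline{w_i}\vee\overline{x_i}\vee\overline{y_i}\vee\overline{z_i}$ (the two are logically equivalent under 2-in-4 semantics, so doubling costs nothing), which already gives every copy one positive and one negative occurrence. Since each original variable has exactly four copies, the cycle of four linking clauses then contributes one more of each polarity, yielding $2{+}2$ with no residual imbalance; only the slack variables~$z_i$ need a small, explicit top-up. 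Your route from unbounded 2-in-4-SAT would work too, but the ``chain extra balancing clauses until all counts match'' step becomes genuinely nontrivial to write out and verify.

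Second, your equality gadget $(x^{i}\vee \bar{x}^{i+1}\vee u_i\vee \bar{w}_i)$ with separate auxiliaries $u_i,w_i$ ``forced by companion clauses to be equal'' is more elaborate than needed. The paper simply takes a \emph{single} auxiliary and writes $\overline{x^{i}}\vee x^{i+1}\vee a\vee \bar{a}$: since exactly one of $a,\bar{a}$ is true, the 2-in-4 constraint forces $x^{i}=x^{i+1}$ directly, with no companion clauses at all. Reusing the same auxiliary across two consecutive linking clauses also gives it its $2{+}2$ count for free. Your version is not wrong, but it introduces an extra layer of gadgetry (and more variables to balance) that the simpler $a\vee\bar a$ trick avoids.
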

\begin{proof}
	In \textsc{Monotone Not-All-Equal 3-Sat}, we are given a propositional formula where each clause contains three different positive literals and the question is whether there is a variable assignment such that in each clause at least one literal is set to true and at least one is set to false. Reducing \textsc{Monotone Not-All-Equal 3-Sat} to \textsc{Two-in-Four-Satisfiability} (without any additional restrictions) is straightforward: 
	Given an instance of \textsc{Monotone Not-All-Equal 3-Sat}, for each clause, we introduce a new variable which we add to the clause. 
	Thereby, we can extend a valid assignment $\alpha$ for the \textsc{Monotone Not-All-Equal 3-Sat} instance by setting for a clause the newly introduced variable to true if $\alpha$ originally sets only one literal from this clause to true and to false if $\alpha$ originally sets only one literal from this clause to false. The reverse direction is immediate.
	However, to achieve that each variable appears twice positive and twice negative, a slightly more involved approach is needed. 
	
	In fact, for simplicity, we reduce from the NP-hard variant of \textsc{Monotone Not-All-Equal 3-Sat} where each variable appears in exactly four clauses \citep{DBLP:journals/tcs/DarmannD20}. 
	Given an instance $\varphi=C_1 \wedge \dots \wedge C_m$ over variables $X$  of \textsc{Monotone Not-All-Equal 3-Sat}, note that $m$ needs to be even, as there are $m=\frac{4\cdot |X|}{3} $ and $m$ needs to be an integer.
	We now construct a new propositional formula $\phi$ over variable set $X'$ as follows. 
	For each clause $C_i=w\vee x \vee y$ for $i\in [m]$, we add variables $w_i$, $x_i$, $y_i$, and $z_i$ to $X'$ and clauses $w_i \vee x_i \vee y_i \vee z_i$ and $\overline{w_i} \vee \overline{x_i} \vee \overline{y_i} \vee \overline{z_i}$ to $\phi$. 
	Now, every variable appears once negative and once positive. 
	It remains to link the copies of each variable. 
	
	We do this for each variable separately. Let $x\in X$ be some original variable 
	and let $j_1, \dots, j_{4}$ denote the list of all clauses where $x$ 
	appears 
	in $\varphi$. We introduce dummy variables $a^1_x$ and $a^2_x$ to $X'$ and add clauses 
	$\overline{x_{j_1}}\vee
	x_{j_{2}} \vee a^1_x \vee \overline{a^1_x}$, $\overline{x_{j_2}}\vee 
	x_{j_{3}} \vee a^1_x \vee \overline{a^1_x}$, $\overline{x_{j_3}}\vee 
	x_{j_{4}} \vee a^2_x \vee \overline{a^2_x}$, and $\overline{x_{j_4}}\vee 
	x_{j_{1}} \vee a^2_x \vee \overline{a^2_x}$ to $\phi$. 
	As for each $j\in [3]$, exactly one of $\overline{x_{j_i}}$ and $x_{j_{i+1}}$ need to be set to true, these clauses enforce that $x_{j_1}$, $x_{j_2}$, $x_{j_3}$, and $x_{j_4}$, all have the same truth value. 
	Lastly, for $i\in [\frac{m}{2}]$, we add twice the clause $z_{2i-1} \vee \overline{z_{2i-1}} \vee z_{2i} \vee \overline{z_{2i}} $ which are always trivially satisfied.
	
	The correctness of the reduction is immediate and all variables appear twice positive and twice negative in $\phi$.
\end{proof}

Using this, we are now ready to prove \Cref{th:otocfprNP}:

\begin{proof}[Proof of \cref{th:otocfprNP}]
	We reduce from 	\textsc{Two-in-Four-Satisfiability} where each variable appears exactly twice positive and twice negative.
	
	\paragraph{Construction.} Given an instance of \textsc{Two-in-Four-Satisfiability} consisting of a propositional formula $\varphi=C_1 \wedge \dots \wedge C_m$ over variables $X=\{x_1,\dots, x_n\}$, for $i\in [n]$, we denote as $t^\text{pos}_{i,1}$ and $t^\text{pos}_{i,2}$ the indices of the two clauses in which variable $x_i$ appears positive and as $t^\text{neg}_{i,1}$ and $t^\text{neg}_{i,2}$ the indices of the two clauses in which variable $x_i$ appears negative. 
	From this, we construct an instance of \otocfpr as follows. 
	For $i\in [n]$, we introduce four agents $a^\text{pos}_i$, $a^\text{neg}_i$, $a^1_i$, and $a^2_i$ (constituting a gadget modeling this variable). 
	Moreover, for $j\in [m]$, we introduce one agent $b_j$.  
	Qualification are symmetric, i.e., if agent $a$ is qualified to review $b$, then $b$ is qualified to review $a$.
	For $i\in [n]$, $a^\text{pos}_i$ is qualified to review $a^1_i$, $a^2_i$, $b_{t^\text{pos}_{i,1}}$, and $b_{t^\text{pos}_{i,2}}$ (and the other way round). 
	Moreover, $a^\text{neg}_i$ is qualified to review $a^1_i$, $a^2_i$, $b_{t^\text{neg}_{i,1}}$, and $b_{t^\text{neg}_{i,2}}$ (and the other way round). 
	Lastly, $a^1_i$ and $a^2_i$ are qualified to review each other and $a^2_i$ and $a^1_{i+1}$ are qualified to review each other (where $i$ is taken modulo $n$).
	We set $\ccr=\ddp=2$ and $z=3$.  
	
	{\bfseries ($\Rightarrow$)} 
	Let $\alpha$ be an assignment of variables in $X$ that is a solution to the given \textsc{Two-in-Four-Satisfiability} instance. 
	For $i\in [n-1]$, we let $a^1_i$ review $a^2_i$ and $a^2_i$ review $a^1_{i+1}$. 
	Moreover, we let $a^1_{n}$ review $a^2_{n}$ and $a^2_n$ review $a^1_1$. 
	
	For $i\in [n]$ where $x_i$ is set to true by $\alpha$, we let $a^1_{i}$ and $a^2_i$ review $a^\text{pos}_i$ and $a^\text{neg}_i$ review $a^1_{i}$ and $a^2_i$.
	Moreover we let $a^\text{pos}_i$ review $b_{t^\text{pos}_{i,1}}$ and $b_{t^\text{pos}_{i,2}}$ and we let $b_{t^\text{neg}_{i,1}}$ and $b_{t^\text{pos}_{i,2}}$ review $a^\text{neg}_i$. 
	Conversely, for $i\in [n]$ where $x_i$ is set to false by $\alpha$, we let $a^1_{i}$ and $a^2_i$ review $a^\text{neg}_i$, we let $a^\text{pos}_i$ review $a^1_{i}$ and $a^2_i$. 
	Moreover, we let $a^\text{neg}_i$ review $b_{t^\text{neg}_{i,1}}$ and $b_{t^\text{neg}_{i,2}}$ and let $b_{t^\text{pos}_{i,1}}$ and $b_{t^\text{pos}_{i,2}}$ review~$a^\text{pos}_i$.
	
	As $\alpha$ sets exactly two literals in each clause to true and two to false, for each $j\in [m]$, $b_j$ is reviewed by two agents and reviews two agents. 
	The same also holds for all other agents, implying that the constructed review assignment is $2$-$2$ valid.  It is easy to see that there are no $2$-cycles. 
	Moreover, as no two agents $b_i$ and $b_j$ for $i\neq j$ are qualified to review each other and, for no $j\in [m]$, are there two agents that are both qualified to review $b_j$ and that are qualified to review each other, each $3$-cycle needs to solely consist of agents from a gadget corresponding to a single variable.
	So let us fix some $i\in [n]$. 
	The only possible $3$-cycles consist of $a^\text{pos}_i$, $a^1_i$ and $a^2_i$ or $a^\text{neg}_i$, $a^1_i$ and $a^2_i$. 
	However, there is no such $3$-cycle, as either $a^\text{pos}_i$ reviews both $a^1_i$ and $a^2_i$ and both $a^1_i$ and $a^2_i$ review $a^\text{neg}_i$, or $a^\text{neg}_i$ reviews both $a^1_i$ and $a^2_i$ and both $a^1_i$ and $a^2_i$ review $a^\text{pos}_i$. 
	Thus, the constructed assignment is $3$-cycle-free.
	
	{\bfseries ($\Leftarrow$)} Assume we are given a $2$-$2$-valid $3$-cycle-free review assignment in the constructed \cfpr instance. 
	Assume that $a^2_n$ reviews $a^1_1$ in the given assignment (if $a^1_1$ reviews $a^2_n$ an analgous argument works). 
	We now argue that $a^1_1$ needs to review $a^2_1$. 
	Assume for the sake of contradiction that this is not the case, then as $a^1_1$ is reviewed by $a^2_n$ and $a^2_1$, she needs to review $a^\text{pos}_1$ and $a^\text{neg}_1$. 
	However, to prevent a $3$-cycle, $a^2_1$ then needs to review $a^\text{pos}_1$ and $a^\text{neg}_1$, a contradiction (as $a^2_1$ gives three reviews). 
	Next, we want to argue that $a^2_1$ reviews $a^1_2$. 
	For the sake of contradiction, assume that $a^1_2$ reviews $a^2_1$. 
	Then, $a^2_1$ already gets two reviews and thus needs to review $a^\text{neg}_1$ and $a^\text{pos}_1$. 
	However, as $a_1^1$ already reviews $a_1^2$ either $a^\text{neg}_1$ or $a^\text{pos}_n$ review $a_1^1$ which leads to a $3$-cycle together with $a_1^1$ and $a_1^2$. 	
	Applying the same arguments inductively, it follows that for $i\in [n-1]$, $a^1_i$ review $a^2_i$ and $a^2_i$ reviews $a^1_{i+1}$ and that $a^1_{n}$ reviews $a^2_{n}$. 

	Further, observe that for each $i\in [n]$ agents $a_i^1$ and $a_i^2$ either both review $a_i^\text{pos}$ or both get reviews from $a_i^\text{pos}$.
	For the sake of contradiction, assume that this is not the case. 
	If $a^2_i$ reviews $a_i^\text{pos}$ and $a_i^\text{pos}$ reviews $a^1_i$, then we have a $3$-cycle consisting of these three agents. 
	Otherwise, $a^1_i$ reviews $a_i^\text{pos}$ and $a_i^\text{pos}$ reviews $a^2_i$. 
	However, as the given assignment is $2$-$2$-valid, from this it follows that $a^2_i$ reviews $a^\text{neg}_i$ and $a^\text{neg}_i$ reviews $a^1_i$, which leads to a $3$-cycle consisting of $a^1_i$, $a^2_i$, and $a^\text{neg}_i$. 
	Thus, we have reached a contradiction proving our initial claim. 
	Moreover, as the given assignment is $2$-$2$ valid, in case that $a_i^1$ and $a_i^2$ both review $a_i^\text{pos}$, then $a_i^\text{neg}$ reviews $a_i^1$ and $a_i^2$, and in case that $a_i^\text{pos}$ reviews both $a_i^1$ and $a_i^2$, then $a_i^1$ and $a_i^2$ both review $a_i^\text{neg}$. 
	We now construct an assignment $\alpha$ by, for $i\in [n]$, setting variable $x_i$ to true if $a_i^1$ and $a_i^2$ review $a_i^\text{pos}$ and $x_i$ to false if $a_i^1$ and $a_i^2$ review $a_i^\text{neg}$. 
	Using our argument from above, it follows that $\alpha$ is well-defined. 
	Moreover, the given assignment is $2$-$2$-valid, if $\alpha$ sets a literal to true, then the agents corresponding to this literal review the agents corresponding to the two clauses in which the literal appears. 
	Similarly, if $\alpha$ sets a literal to false, then the agents corresponding to this literal get a review from the two agents corresponding to the two clauses in which the literal appears. 
	Thus, as each agent corresponding to a clause gets and issues two reviews (as the given assignment is $2$-$2$-valid), it follows that $\alpha$ sets for each clause exactly two literals to true and thus that $\alpha$ is a solution to the given instance of \textsc{Two-in-Four-Satisfiability}.
\end{proof}

\section{Polynomial-Time Solvable Special Cases}\label{sec:heuristic}

In this section, we identify conditions under which a short-cycle-free review assignment provably exists and can be computed in polynomial time.
As we will see in our experiments, the subsequently presented algorithms provide short-cycle-free review assignments even beyond the theoretical limitations we discuss below.
As we are interested in computing $z$-cycle-free review assignments for $z\geq 1$, no author is allowed to review one of its own papers.
That is why throughout this section we assume that we do not have~$(a,p) \in E_A$ and~$(p,a)\in E_P$ at the same time.

Our algorithms in this section are based on the following simple observation:
Given a partial $z$-cycle-free review assignment~$E'$ and a paper~$p \in P$ that requires more assigned reviewers, the number of potential reviewers that would create a $z$-cycle--if assigned to review~$p$--is bounded by a function in~$z$, the maximum number~$\Delta_P^+$ of authors per paper, and the maximum number~$\ccr$ of reviews per agent; the precise function is given in the subsequent proofs.
Thus, assuming that the minimum number~$\delta_P^-$ of potential reviewers for each paper is large compared to~$z$, $\Delta_P^+$, $\ddp$, and~$\ccr$, for each paper~$p$ there are always reviewers that can be assigned to review~$p$ without creating a $z$-cycle.
Note that in practice we can expect that~$z$, $\ddp$, and~$\ccr$ are quite small.
Moreover,  while the minimum number of fitting reviewers might be not very large, it is not uncommon to assign papers to reviewers that are not ``perfect''.
Thus, interpreting~$\delta_P^-$ as the number of community members that do not have a conflict of interest actually yields relative large values for~$\delta_P^-$ in practice.

We start with a very restrictive setting and then, step by step, generalize the approach and the results.
First, each paper is written by exactly one author, each agent has at most one paper and we want a completely cycle-free review assignment (i.\,e., $z$-cycle-free for every~$z\in \N$).
This of course implies that some agents cannot be authors of papers and so the number~$n_P$ of papers is smaller than the number~$n_A$ of agents.
However, it allows \cref{alg:greedy-dag} to work (implicitly) with the topological ordering of the (acyclic) review assignment while constructing it. 

\begin{algorithm}[t!]
	\caption{A greedy algorithm computing a $\ddp$-$\ddp$-valid completely cycle-free assignment~$E'$.
	}
	\label{alg:greedy-dag}
	$E' \gets \emptyset; S_0 \gets$agents without papers \label{line:initialize}\;
	\tcc{$\phi_i(a)$ is the free reviewing capacity of $a$ before iteration~$i$ of the for-loop from Line 3; each agent reviews at most~$\ddp$ papers}
	\lForEach{$a \in A$}{$\phi_0(a) \gets \ddp$}
	
	\tcc{Assign reviewers to one paper per iteration:}
	\For(\label{line:paper-loop}){$i \gets 0$ \KwTo $n_P - 1$}
	{
		\lForEach{$a \in A$}{$\phi_{i+1}(a) \gets \phi_{i}(a)$}
		select some $(p,a) \in E_P$ where~$p$ has no reviews yet \label{line:select-paper} \;
		\tcc{collect qualified reviewers and assign~$\ddp$ of them to~$p$}
		$R \gets \{b \in S_i \mid (b,p) \in E_A\}$ \label{line:eligible-reviewers}\;
		\For(\label{line:assign-reviewers}){$j \gets 1$ \KwTo $\ddp$} 
		{
			arbitrary~$b \in R$ reviews~$p\colon E' \gets E' \cup \{(b,p)\}$\;
			$\phi_{i+1}(b) \gets \phi_i(b) - 1$; $R \gets R \setminus \{b\}$ \label{line:assign-reviewers-end}
		}
		\tcc{collect possible reviewers for next paper}
		$S_{i+1} := \{a\} \cup  \{b \in S_i \mid \phi_{i+1}(b) > 0\}$ \label{line:update-S-set}
	}
	\textbf{return} $E'$
\end{algorithm}%

\begin{proposition}\label{lem:cycle-free-simple-setting}
	If~$\Delta_A^- \le 1 = \delta_P^+ = \Delta_P^+$, $\ddp \le \ccr$, and~$\delta_P^- \ge n_P + \ddp$, then \cref{alg:greedy-dag} computes a $\ddp$-$\ddp$-valid and completely cycle-free review assignment in linear time.
\end{proposition}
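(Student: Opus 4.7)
The plan is to verify three things: that the returned $E'$ is $\ddp$-$\ddp$-valid, that it is completely cycle-free, and that the algorithm runs in linear time. The structural heart of the argument is that the algorithm processes papers in an order $p_0, p_1, \dots, p_{n_P-1}$ (one per iteration of the outer loop), and this order turns out to be a topological ordering of the resulting review assignment.

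For cycle-freeness, I would first observe that every review cycle must be composed entirely of agents who actually author a paper, since the definition of a review cycle of length $z$ requires each $a_k$ to be an author of the corresponding $p_k$. Then I would establish the invariant that whenever the algorithm assigns $b$ to review the paper $p_i$ of iteration $i$, either $b$ authors no paper at all, or $b$ is the (unique) author of some $p_j$ with $j < i$. This invariant follows directly from how $S_i$ is built: $S_0$ contains only agents without papers, and an author $a$ enters $S$ only in the iteration immediately after her paper has been processed. With the invariant in hand, a hypothetical cycle $a_1 \to p_2 \to a_2 \to \cdots \to a_z \to p_1$ with $a_k$ authoring $p_k$ would impose a strict cyclic chain of processing indices $j_1 < j_2 < \cdots < j_z < j_1$, which is impossible.

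The main obstacle is the feasibility step, showing that in every iteration the inner loop does not run out of candidates, i.e., $|R| \ge \ddp$. I would control $|A \setminus S_i|$ via two contributions. First, exactly $n_P - i$ agents are authors of papers still to be processed (including the current one $p$), and none of them are in $S_i$. Second, the agents that were in some earlier $S_j$ but have since been dropped because $\phi(b) = 0$ each account for exactly $\ddp$ assigned reviews; since the total number of reviews distributed in iterations $0, \dots, i-1$ is $i \cdot \ddp$, at most $i$ such exhausted agents can exist. Summing these disjoint contributions yields $|A \setminus S_i| \le n_P$. Because $p$ has at least $\delta_P^- \ge n_P + \ddp$ qualified reviewers in $A$, we obtain $|R| \ge \delta_P^- - n_P \ge \ddp$, as needed. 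The $\ddp$-$\ddp$-validity is then immediate: each paper receives exactly $\ddp$ reviews by construction, and no agent ever exceeds $\ddp$ reviews because the capacity $\phi$ explicitly enforces this.

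For the running time, a standard incidence-list representation of $E_A$ and $E_P$ lets us maintain $S_i$, $R$, and the potentials $\phi$ incrementally; summing over iterations, the total work is $O(|E_A| + |E_P| + n_A + \ddp \cdot n_P)$, which is linear in the input size. The subtle point I would double-check is that the ``pending author'' and ``exhausted capacity'' buckets in the feasibility count are genuinely disjoint: an author does not enter $S$ until the iteration after her own paper is processed, so while still pending she cannot have reviewed anything and hence cannot be exhausted. This disjointness is exactly what lets the two bounds be added rather than combined with a correction term.
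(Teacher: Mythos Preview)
Your proof is correct and follows essentially the same strategy as the paper: both use the implicit topological order of processed papers for cycle-freeness and establish that $|S_i| \ge n_A - n_P$ to guarantee $|R| \ge \ddp$ at every step. The only cosmetic difference is that the paper derives $|S_i| \ge |S_0|$ via the invariant $\sum_{a \in S_i} \phi_i(a) = |S_0|\cdot \ddp$, whereas you obtain the same bound by directly partitioning $A \setminus S_i$ into the $n_P - i$ pending authors and at most $i$ exhausted reviewers; these are two phrasings of the same counting argument.
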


\begin{proof}
	We first show the correctness of \cref{alg:greedy-dag}.
	Clearly, if in each iteration of the loop in \cref{line:paper-loop} the set of eligible reviewers~$R$ (see \cref{line:eligible-reviewers}) is of size at least~$\ddp$, then a completely cycle-free review assignment is created as each agent only reviews papers from agents ``occurring'' later during the algorithm.
	Observe that if~$|S_i| \ge n_A - \delta_P^- + \ddp$ for~$i \in \{0,\ldots,n_P-1\}$, then in iteration~$i$ we have~$|R| \ge \ddp$: 
	There are at most~$n_A - \delta_P^-$ agents in~$S_i$ that cannot review~$p$ (the corresponding edge is not in~$E_A$) and, thus, at least~$\ddp$ agents in~$S_i$ are eligible to review~$p$.
	It remains to show that~$|S_i| \ge n_A - \delta_P^- + \ddp$ for all~$i \in \{0,\ldots,n_P-1\}$ follows from our assumptions.
	By assumption of the lemma we have~$n_P \le \delta_P^- - \ddp$.
	Hence, $|S_0| = n_A - n_P \ge n_A - \delta_P^- + \ddp$.
	We next show that~$|S_i| \ge |S_0|$ for all~$i \in [n_P-1]$.
	Observe that at the start we have~$\sum_{a \in S_{0}} \phi_0(a) = |S_0| \cdot \ddp$.
	Moreover, after the $i$th iteration of the loop in \cref{line:paper-loop} we have~$\sum_{a \in S_{i+1}} \phi_{i+1}(a) = \sum_{a \in S_{i}} \phi_{i}(a)$ as each paper gets~$\ddp$ reviews and the reviewer~$a$ in~$S_{i+1}\setminus S_i$ starts with~$\phi_{i+1}(a) = \ddp$. 
	Observe that~$\phi_i(a) \le \ddp$ for all~$a\in A$ and~$i \in \{0,\ldots,n_P-1\}$.
	Thus, we have~$|S_i| \ddp \ge \sum_{a \in S_{i}} \phi_i(a) = \sum_{a \in S_{0}} \phi_0(a) = |S_0| \ddp$ and, hence, $|S_i| \ge |S_0|$.
	This completes the correctness proof.
	
	As to the running time, everything outside the loop starting in \cref{line:paper-loop} clearly runs in linear time.
	As to the part inside the loop, note that by keeping just one array of length~$n_A$ we can store the values of~$\phi$ in linear time. 
	Moreover, the reviewers for~$p$ are selected arbitrarily from $R$, which is doable in~$|N^-(p,E_A)|$ time.
	Hence, the loop in \cref{line:paper-loop} can be processed in~$O(n_P + |E_A|)$ time.
	Thus, the overall algorithm runs in~$O(n_A + n_P + |E_A|)$, that is, linear time.
\end{proof}

For our next result we replace the completely cycle-free property of the resulting review assignment with $z$-cycle freeness.
This implies that the idea of constructing the review assignment along its topological ordering (as done by \cref{alg:greedy-dag}) cannot be employed.
Instead,  \cref{alg:greedy-short-cycle-free} constructs greedily a maximal $z$-cycle-free assignment and then extends the assignment by replacing one review assignment by two other assignments.
The argument behind the replacement strategy is an extension of the argument in \cref{alg:greedy-dag} that there are always enough reviewers to assign in \crefrange{line:assign-reviewers}{line:assign-reviewers-end}.

To keep our arguments simple we first consider the case that each agent reviews at most one paper and each paper requires one review. 
Moreover, as before, we are in the setting that each paper has one author and each agent authors at most one paper. 
Formally, we have the following.

\begin{proposition}\label{lem:short-cycle-free-simple-setting}
	If~$\Delta_A^- \le 1 = \delta_P^+ = \Delta_P^+ = \ccr = \ddp$, $n_A \ge n_P$, $\delta_A^+ > z$, $\delta_P^- > z$, and~$n_P \le \delta_A^+ + \delta_P^- - 2z$, then \cref{alg:greedy-short-cycle-free} computes a~$\ccr$-$\ddp$-valid $z$-cycle-free review assignment in polynomial time.
\end{proposition}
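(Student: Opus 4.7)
The plan is to analyze \cref{alg:greedy-short-cycle-free}, which has a greedy phase followed by a swap phase that extends the assignment whenever the greedy phase leaves a paper uncovered. The foundation is a counting step analogous to the progress argument in \cref{lem:cycle-free-simple-setting}: for any $z$-cycle-free partial assignment $E'$ and any paper~$p$ with author~$\alpha$, the set $\mathrm{Bad}_p(E')$ of agents whose assignment to~$p$ would create a cycle of length at most~$z$ has size at most~$z$. This holds because, under $\ccr = \ddp = \delta_P^+ = \Delta_P^+ = 1$ and $\Delta_A^- \le 1$, the forward alternating walk $\alpha \to p^{(1)} \to a^{(1)} \to p^{(2)} \to \cdots$ in $E' \cup E_P$ is unique, and a length-$\le z$ cycle through the new edge $(b, p)$ forces $b$ to be the author of one of at most $z$ papers this walk visits.

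In the greedy phase I would process papers one at a time and, for each~$p$, extend~$E'$ by any assignment $(b, p)$ with $b \in R_p \setminus \mathrm{Bad}_p(E')$ currently unassigned. If every paper is covered we are done. Otherwise, consider an uncovered paper~$p$ with author~$\alpha$. By assumption every agent in $B_p := R_p \setminus \mathrm{Bad}_p(E')$ is busy, hence $|B_p| \ge \delta_P^- - z$, and since $\ccr = 1$ the set $Q$ of papers currently reviewed by $B_p$ satisfies $|Q| = |B_p| \ge \delta_P^- - z$. The swap phase replaces one existing edge $(b_q, q)$ by two new edges $(b_q, p)$ and $(\alpha, q)$ for a suitably chosen $q$. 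A straightforward inclusion-exclusion in the universe~$P$ yields
\[
  |N^+(\alpha, E_A) \cap Q| \;\ge\; \delta_A^+ + (\delta_P^- - z) - n_P \;\ge\; z,
\]
using the hypothesis $n_P \le \delta_A^+ + \delta_P^- - 2z$, so at least $z$ candidate papers~$q$ are simultaneously reviewable by~$\alpha$ and currently reviewed by some $b_q \in B_p$. Ensuring that $\alpha$ is itself unassigned when $(\alpha, q)$ is added is a minor bookkeeping point, handled either by processing~$p$ before any paper that~$\alpha$ could review, or by extending the swap to also re-route~$\alpha$'s prior assignment.

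It remains to show that at least one of these $z$ candidates yields a swap preserving $z$-cycle-freeness. The edge $(b_q, p)$ is safe by definition of~$B_p$; since $\alpha$'s only outgoing review in the new assignment is $(\alpha, q)$, any newly introduced length-$\le z$ cycle must use both new edges and therefore encodes a forward alternating walk from the author of~$q$ that reaches $b_q$'s paper within $z-2$ existing review steps. Because in the simple setting the forward walks from distinct authors are vertex-disjoint and each~$b_q$ determines a distinct target paper, a careful enumeration shows that at most $z-1$ of the $z$ candidates can be bad in this sense, leaving at least one good swap. Each successful swap strictly decreases the number of uncovered papers, so the algorithm terminates after at most $n_P$ rounds, each of polynomial cost. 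The main technical obstacle is precisely this last counting: bounding the number of ``bad swap'' candidates is more delicate than the initial bound on $|\mathrm{Bad}_p|$, and it relies crucially on the joint use of $\ccr = 1$, $\delta_P^+ = 1$, and $\Delta_A^- \le 1$ to keep the competing forward walks out of each other's way.
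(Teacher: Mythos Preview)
Your overall plan mirrors the paper's: bound by~$z$ the reviewers whose assignment to~$p$ would create a short cycle, pass to the set~$P_p$ of papers they currently review ($|P_p|\ge\delta_P^- - z$), and intersect with a second large set of papers to locate a swap. Two points diverge, and one of them is a real gap.

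First, the paper does \emph{not} take the free agent to be~$\alpha$, the author of~$p$, but an arbitrary agent~$a$ with no assigned review (such an~$a$ exists since~$n_A\ge n_P$ and~$p$ is unreviewed). It then works directly with the set~$P_a$ of papers that~$a$ could review in~$E'$ without creating a $z$-cycle; because the backward alternating walk into~$a$ is unique one gets~$|P_a|\ge\delta_A^+-z$, and pigeonhole on~$P$ yields~$P_a\cap P_p\neq\emptyset$. Your insistence on~$a=\alpha$ forces~$\alpha$ to be unassigned, but the proposition concerns \cref{alg:greedy-short-cycle-free} as written, and nothing in its greedy phase prevents~$\alpha$ from receiving a review assignment before~$p$ comes up in Case~2; neither of your two suggested fixes (reordering the greedy phase, or a larger swap) is available without changing the algorithm.

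Second, the step you flag as the ``main technical obstacle'' is not a delicate enumeration. Suppose the swapped assignment $(E'\setminus\{(a',p')\})\cup\{(a',p),(a,p')\}$ contained a review cycle of length~$\ell\le z$. A cycle through only one new edge is excluded by~$a'\in A_p$ respectively~$p'\in P_a$. A cycle through both new edges must contain a segment from the author~$\beta$ of~$p'$ to the paper authored by~$a'$, using at most~$\ell-2\le z-2$ review steps inside~$E'\setminus\{(a',p')\}$. Reinserting the removed edge~$(a',p')$ then closes this segment into a review cycle $\beta\to\cdots\to q_{a'}\to a'\to p'\to\beta$ of length at most~$z-1$ entirely inside~$E'$, contradicting that~$E'$ was $z$-cycle free. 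Hence \emph{every} candidate swap is safe; your disjoint-forward-walks counting is not needed, and the claimed ``at most~$z-1$ bad'' bound is in fact~$0$.
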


\begin{algorithm}[t!]
	\caption{Greedy algorithm to compute a $\ccr$-$\ddp$-valid $z$-cycle-free review assignment~$E'$.}
	\label{alg:greedy-short-cycle-free}
	$E' \gets \emptyset$\;
	\While(\label{line:while-loop}){$\exists p \in P\colon |N^-(p,E')| < \ddp$}
	{
		\If(\label{line:greedy-grow-matching}){$\exists (a,p) \in E_A \setminus E' \colon E' \cup \{(a,p)\}$ is $z$-cycle free and~$|N^+(a,E')| < \ccr$}
		{
			\tcc{Case 1: greedy assignment of reviews as long as no $z$-cycles are created:}
			$E' \gets E' \cup \{(a,p)\}$ \label{line:greedy-grow-matching-end}\;
		}
		\Else
		{
			\tcc{Case 2: replace one review assignment by two:}
			pick~$(a',p') \in E'$ and~$a \in A$ so that $|N^+(a,E')| < \ccr$, $(a',p),(a,p')\in E_A$, and~$(E' \setminus \{(a',p')\}) \cup \{(a',p),(a,p')\}$ is $z$-cycle free \label{line:swap-candidates}\;
			$E' \gets (E' \setminus \{(a',p')\}) \cup \{(a',p),(a,p')\}$ \label{line:swap-reviews-end}\;
		}
	}
	\textbf{return} $E'$ \label{line:return-E'}
\end{algorithm}

\begin{proof}
	Obviously, \cref{alg:greedy-short-cycle-free} terminates after at most~$n_P$ iterations of the while loop as in each iteration the number of assigned reviews increases.
	Moreover, a~$\ccr$-$\ddp$-valid $z$-cycle-free review assignment is returned if~$a,a',p'$ as described in case 2 (\cref{line:swap-candidates}) always exist.
	To prove their existence, we introduce some notation.
	For some~$v \in A \cupdot P$ let~$N_z^+(v,E' \cupdot E_P)$ be the $z$-out-neighborhood of~$v$, that is, the set of vertices that can be reached from~$v$ in the review graph~$(A \cupdot P, E' \cupdot E_P)$ via a path of length at least one and at most~$2z$.
	Similarly, let~$N_z^-(v,E' \cupdot E_P)$ be the $2z$-in-neighborhood of~$v$, that is, the set of vertices that can reach~$v$ in the review graph~$(A \cupdot P, E' \cupdot E_P)$ via a path of length at least one and at most~$2z$.
	Note that if~$v \in N_z^-(v,E' \cupdot E_P)$, then also~$v \in N_z^+(v,E' \cupdot E_P)$ and~$v$ is contained in a review cycle of length~$z$ (that is a directed cycle of length~$2z$ in $(A \cupdot P, E' \cupdot E_P)$). 
	Subsequently, we present upper bounds on the size of~$N_z^-(v,E' \cupdot E_P)$ and~$N_z^+(v,E' \cupdot E_P)$ for~$v \in A \cupdot P$ thereby proving the existence of~$a, a', p'$.
	 
	Let~$p \in P$ be the paper without reviewer selected in \cref{line:while-loop} when the algorithm enters case 2. 
	Let~$A_p \subseteq A$ be the set of agents that could review~$p$ without creating a $z$-cycle, that is, $A_p := \{a \in A \mid (a,p) \in E_A \land E' \cup \{(a,p)\}$ is $z$-cycle free$\}$.
	Since~$\ddp=\ccr = \Delta_P^+ = 1$, there are at most~$z$ agents whose assignment to review~$p$ would create a review cycle, that is, $|N_z^+(p,E' \cupdot E_P) \cap A| \le z$, and thus~$|A_p| \ge \delta_P^- - z$.
	Since we are in case 2, no more review assignments could be added without creating a~$z$-cycle. 
	Hence, the algorithm assigned the at least~$\delta_P^- - z$ potential reviewers in~$A_p$ to different papers. 
	Let~$P_p$ be the set of these papers.
	Since~$\ddp = \ccr = 1$ we have~$|P_p| = |A_p| \ge \delta_P^- - z$.
	
	Let~$a \in A$ be an arbitrary agent without assigned review, that is, $\nexists p''\colon (a,p'') \in E'$.
	Since~$\ddp=\ccr = \Delta_A^- = 1$, we have~$|N_z^-(a,E' \cupdot E_P) \cap P| \le z$. 
	Thus, there are~$\delta_A^+ - z$ papers that~$a$ could review without creating a $z$-cycle; let~$P_a$ denote the set of these papers.
	Since we assume that~$n_P \le \delta_A^+ + \delta_P^- - 2z$, it follows that there is a~$p' \in P_a \cap P_p$. 
	By definition of~$P_p$ there is an agent~$a'$ with~$(a',p') \in E'$ and~$a' \in A_p$.
	Thus, $a, a', p'$ exist and $E'$ can be updated to~$(E' \setminus \{(a',p')\}) \cup \{(a',p),(a,p')\}$~in~\cref{line:swap-reviews-end}.
\end{proof}

We now turn our attention to our general case where agents can author and review many papers and papers can have multiple authors and can require several reviews.
While the conditions that guarantee the existence of a $z$-cycle-free review assignment need adjustments, we can still use \cref{alg:greedy-short-cycle-free} together with a correctness proof that follows a similar pattern as the proof of \cref{lem:short-cycle-free-simple-setting}.

\begin{restatable}{theorem}{solExists}
	\label{thm:sol-exists}
	If, $n_A \cdot \ccr \ge n_P \cdot \ddp$, $\delta_A^+ > 2(\Delta_A^- \cdot \ddp)^{z} + \ccr$, $\delta_P^- > 2(\Delta_P^+ \cdot \ccr)^{z}+\ddp$, and~$n_P \le \delta_A^+ - 2(\Delta_A^- \cdot \ddp)^{z} - \ccr + (\ccr / \ddp) (\delta_P^- - 2(\Delta_P^+ \cdot \ccr)^{z} - \ddp)$, then \cref{alg:greedy-short-cycle-free} computes a~$\ccr$-$\ddp$-valid $z$-cycle-free review assignment in polynomial time.
\end{restatable}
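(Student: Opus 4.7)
My plan is to extend the analysis of \cref{lem:short-cycle-free-simple-setting} to \cref{alg:greedy-short-cycle-free} in its full generality. Termination and polynomial running time are immediate: in each iteration of the while loop, $|E'|$ grows by exactly one (Case~1 adds one edge; Case~2 removes one and inserts two), so there are at most $n_P\cdot\ddp$ iterations, each implementable in polynomial time. The core task is to prove that whenever Case~2 is triggered on some unfilled paper $p$, the witness triple $(a,a',p')$ on \cref{line:swap-candidates} always exists.

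First I would carry over the short-range neighborhoods $N_z^+(v,E'\cupdot E_P)$ and $N_z^-(v,E'\cupdot E_P)$ from the simple proof and bound them by a level-by-level count along the bipartite review graph: each author-edge step branches by at most $\Delta_P^+$ (or $\Delta_A^-$ when walking backwards from an agent), each review-edge step by at most $\ccr$ (or $\ddp$). Summing the geometric series over walks of lengths $1,3,\dots,2z-1$ and absorbing constants into a factor of $2$ yields
\[
|N_z^+(p,E'\cupdot E_P)\cap A|\le 2(\Delta_P^+\,\ccr)^z,\quad |N_z^-(a,E'\cupdot E_P)\cap P|\le 2(\Delta_A^-\,\ddp)^z.
\]

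Now fix the paper $p$ selected in Case~2 and define $A_p:=\{a'\in A\mid (a',p)\in E_A\setminus E' \text{ and } E'\cup\{(a',p)\} \text{ is } z\text{-cycle free}\}$. Subtracting from the $\ge\delta_P^-$ qualified reviewers the at most $\ddp$ already assigned to $p$ and the at most $2(\Delta_P^+\,\ccr)^z$ cycle-creating candidates gives $|A_p|\ge\delta_P^--\ddp-2(\Delta_P^+\,\ccr)^z$. Since Case~1 fails, every agent in $A_p$ is already saturated at $\ccr$ reviews, so the $|A_p|\cdot\ccr$ reviews they collectively issue are spread over at least $|P_p|\ge(\ccr/\ddp)(\delta_P^--\ddp-2(\Delta_P^+\,\ccr)^z)$ distinct papers (no paper receives more than $\ddp$ reviews). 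For the other endpoint, the capacity inequality $n_A\cdot\ccr\ge n_P\cdot\ddp$ together with $p$ still missing a review forces some agent $a$ to have $|N^+(a,E')|<\ccr$; letting $P_a$ collect the papers that $a$ is qualified to review, is not yet reviewing, and whose addition would not create a $z$-cycle gives $|P_a|\ge\delta_A^+-\ccr-2(\Delta_A^-\,\ddp)^z$. The hypothesis on $n_P$ is exactly the statement $|P_a|+|P_p|>n_P$, so $P_a\cap P_p\neq\emptyset$; any $p'$ in this intersection together with an $a'\in A_p$ currently reviewing $p'$ yields the required triple.

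The main obstacle I anticipate will be the last verification: that the compound swap, and not only each edge addition taken separately, preserves $z$-cycle freeness. Each new edge on its own is safe by construction ($a'\in A_p$ for $(a',p)$ and $p'\in P_a$ for $(a,p')$), but a cycle of length at most $z$ could still close through both new edges $(a',p)$ and $(a,p')$ together with walks in $E'\cup E_P$ of combined length at most $2z-2$. Handling this composite case is precisely where the factor-of-$2$ slack in the neighborhood bounds does real work: the same level-by-level argument bounds the number of ``bad'' $p'\in P_p$ for which such a composite cycle could close through $a$, and this extra blocker count fits inside the existing $2(\Delta_P^+\,\ccr)^z$ and $2(\Delta_A^-\,\ddp)^z$ slack without affecting the arithmetic on $n_P$. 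Carrying out this final bookkeeping cleanly is the most delicate step of the proof.
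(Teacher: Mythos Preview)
Your outline mirrors the paper's proof almost step for step: the same geometric-series bound on $|N_z^+(p,E'\cupdot E_P)\cap A|$ and $|N_z^-(a,E'\cupdot E_P)\cap P|$, the same lower bounds on $|A_p|$, $|P_p|=\ccr|A_p|/\ddp$, and $|P_a|$, and the same pigeonhole conclusion $|P_a|+|P_p|>n_P$.

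You are right to flag the composite-cycle issue at the end; the paper's proof does not make this step explicit. But the resolution is far simpler than the slack-accounting you propose, and the factor~$2$ plays no role in it. Suppose the swap produced a review cycle of length $z'\le z$ using both new edges $(a',p)$ and $(a,p')$. Inside that cycle there is a directed simple path $p'\rightsquigarrow a'$ of odd length $2m-1$ with $1\le m\le z'-1$, and this path uses only edges of $(E'\setminus\{(a',p')\})\cup E_P\subseteq E'\cup E_P$. Now append the \emph{removed} review edge $a'\to p'$, which lies in the original $E'$: you obtain a review cycle of length $m\le z'-1<z$ in $(A\cupdot P,E'\cupdot E_P)$, contradicting the invariant that $E'$ was $z$-cycle free before the swap. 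Hence no composite cycle can arise, and there are zero ``bad'' $p'$ to subtract. The factor~$2$ in $2(\Delta_P^+\ccr)^z$ is purely the crude bound $\Delta_P^+\sum_{i=0}^{z-1}(\Delta_P^+\ccr)^i<2(\Delta_P^+\ccr)^z$ on the geometric series, not a reserve held back for this step.
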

\begin{proof}
	We use the same notation as in the proof of \cref{lem:short-cycle-free-simple-setting} and similarly to this proof we need to show that~$a,a',p'$ as described in \cref{line:swap-candidates} actually always exist.

	Let~$p \in P$ be the paper with a missing review selected in \cref{line:while-loop} and the algorithm entered case 2. 
	Let~$A_p \subseteq P$ be the set of agents that could review~$p$ without creating a $z$-cycle, that is, $A_p := \{a \in A \mid (a,p) \in E_A \setminus E' \land E' \cup \{(a,p)\}$ is $z$-cycle free$\}$.
	As every paper has at most~$\Delta_P^+$ authors and every author has at most~$\ccr$ assigned papers to review, it follows that
	\begin{align*}
		{}& |N_z^+(p,E' \cupdot E_P) \cap A| 	\le \Delta_P^+ \cdot \sum_{i = 0}^{z-1} (\Delta_P^+ \cdot \ccr)^{i} \\
		={}& \Delta_P^+ \cdot ((\Delta_P^+ \cdot \ccr)^{z}-1)/((\Delta_P^+ \cdot \ccr)-1)\\
		<{}& 2(\Delta_P^+ \cdot \ccr)^{z}.
	\end{align*}
	Thus,~$|A_p| > \delta_P^- - 2(\Delta_P^+ \cdot \ccr)^{z} - \ddp$, as at most~$\ddp-1$ agents are already assigned to~$p$ and at most $|N_z^+(p,E' \cupdot E_P) \cap A|<2(\Delta_P^+ \cdot \ccr)^{z}$ agents cannot review $p$ because this would cause a review cycle of length at most $z$.
	When case 2 was entered, no more review assignment could be added without creating a~$z$-cycle. 
	Hence, the algorithm assigned the potential reviewers in~$A_p$ already to different papers.
	Let~$P_p$ be the set of these papers.
	Note that~$|P_p| \ge \ccr |A_p| / \ddp$.
	
	Let~$a \in A$ be an arbitrary agent that can do one more review, that is, $|N^+(a,E')|< \ccr$. 
	Using a similar argument as above, we can show~$|N_z^-(a,E' \cupdot E_P) \cap P| < 2(\Delta_A^- \cdot \ddp)^{z}$. 
	Thus, there are more than~$\delta_A^+ - 2(\Delta_A^- \cdot \ddp)^{z} - \ccr$ papers that~$a$ could review additionally without creating a $z$-cycle; let~$P_a$ denote the set of these papers.
	Note that by our assumptions that $\delta_A^+ > 2(\Delta_A^- \cdot \ddp)^{z} + \ccr$ and $\delta_P^- > 2(\Delta_P^+ \cdot \ccr)^{z}+\ddp$, $P_a$ and $P_p$ are both non-empty.
	Since~$n_P \le \delta_A^+ - 2(\Delta_A^- \cdot \ddp)^{z} - \ccr + (\ccr / \ddp) (\delta_P^- - 2(\Delta_P^+ \cdot \ccr)^{z} - \ddp) < |P_a| + |P_p|$, it follows that there is a~$p' \in P_a \cap P_p$.
	By definition of~$P_p$ there is an agent~$a'$ with~$(a',p') \in E'$ and~$a' \in A_p$.
	Thus, $a, a', p'$ exist and $E'$ can be updated to~$(E' \setminus \{(a',p')\}) \cup \{(a',p),(a,p')\}$ in \cref{line:swap-reviews-end}.
	This finishes the correctness proof.
\end{proof}

To simplify the statement of \cref{thm:sol-exists} consider a ``symmetric'' case where~$n_A \ge n_P$, $\delta_P^- = \delta_A^+$, and~$\Delta_P^+ = \Delta_A^-$.
For brevity, set~$n := n_P$, $\delta := \delta_P^-$, and~$\Delta := \Delta_P^+$.
Let~$\coi$ be the maximum number of papers any agent is not qualified to review/has a conflict of interest with, that is, $\coi = n - \delta$.
Setting $\ccr = 6$ and~$\ddp = 3$ as in our experiments~we~get:

\begin{corollary}\label{cor:translatePcases}
	If~$n - 6 \ge 1.5 \cdot \coi + \Delta^z(6^z \cdot 2 + 3^z)$, then there always exists a $6$-$3$-valid $z$-cycle-free review assignment that can be found in polynomial time.
\end{corollary}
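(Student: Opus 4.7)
The plan is to specialize \cref{thm:sol-exists} to the symmetric setting described before the corollary and show that the single inequality given in the corollary implies all four hypotheses of the theorem. I would substitute $\ccr = 6$, $\ddp = 3$, $\delta_A^+ = \delta_P^- = \delta$, $\Delta_A^- = \Delta_P^+ = \Delta$, and $n_P = n$, and verify the four hypotheses one by one.

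First, the capacity condition $n_A \cdot \ccr \ge n_P \cdot \ddp$ follows immediately from $n_A \ge n_P = n$ and $\ccr > \ddp$. This step is essentially free.

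Second, and the core step, I would translate the fourth hypothesis of \cref{thm:sol-exists}, namely
\[
n_P \le \delta_A^+ - 2(\Delta_A^- \cdot \ddp)^{z} - \ccr + (\ccr/\ddp)\bigl(\delta_P^- - 2(\Delta_P^+ \cdot \ccr)^{z} - \ddp\bigr),
\]
into the form stated in the corollary. After substitution this reads $n \le 3\delta - \Delta^z(2\cdot 3^z + 4 \cdot 6^z) - 12$; dividing by $2$ and replacing $\delta$ by $n - \coi$ rearranges it into exactly $n - 6 \ge 1.5 \cdot \coi + \Delta^z(2 \cdot 6^z + 3^z)$. So the two conditions are in fact equivalent, and this step is just careful bookkeeping of the $(\ccr/\ddp) = 2$ factor so that the $-\ccr$ and $-\ddp$ terms combine to the clean $-12$ and the halved $\Delta^z$-coefficient matches.

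Third, I would check that the two remaining hypotheses $\delta_A^+ > 2(\Delta_A^- \cdot \ddp)^z + \ccr$ and $\delta_P^- > 2(\Delta_P^+ \cdot \ccr)^z + \ddp$ are implied by the corollary's assumption. Because an agent can be qualified to review at most the $n$ existing papers, we have $\delta \le n$, so $\coi \ge 0$. Feeding $\coi \ge 0$ into the corollary's inequality yields $n - 6 \ge \Delta^z(2 \cdot 6^z + 3^z)$, and combining this with $n \ge \delta$ (equivalently, $1.5\delta - 0.5n \le \delta$) in the rewritten form $1.5\delta \ge 0.5n + 6 + \Delta^z(2 \cdot 6^z + 3^z)$ gives $\delta \ge 6 + \Delta^z(2 \cdot 6^z + 3^z)$. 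Since $2 \cdot 6^z \ge 2 \cdot 3^z$ and the extra $3^z \Delta^z$ slack dominates the constant gap between $6$ and $3$, this quantity strictly exceeds both $\ccr + 2(\Delta \cdot \ddp)^z$ and $\ddp + 2(\Delta \cdot \ccr)^z$, as required.

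I do not foresee any real obstacle: the proof is a routine specialization of \cref{thm:sol-exists}. The only delicate point is the algebraic rewriting in the second step, where one must carefully track the coefficients introduced by $\ccr/\ddp = 2$ and by the substitution $\coi = n - \delta$ in order to recover exactly the stated inequality.
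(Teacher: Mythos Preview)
Your proposal is correct and is exactly the intended approach: the paper states the corollary without proof, treating it as an immediate specialization of \cref{thm:sol-exists} under the substitutions $\ccr=6$, $\ddp=3$, $\delta_A^+=\delta_P^-=\delta=n-\coi$, $\Delta_A^-=\Delta_P^+=\Delta$, $n_P=n$. Your verification that the single inequality in the corollary is equivalent to the fourth hypothesis of the theorem and that it also forces the remaining hypotheses (using $\delta=\delta_A^+\le n_P=n$, hence $\coi\ge 0$) fills in precisely the bookkeeping the paper omits.
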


Considering that AAAI'22 had 9,251 submissions and that there was a submission limit of $10$ papers per author and assuming that each paper has at most ten authors (implying that~$\Delta = 10$) and that each author has at most 700 conflict of interests,
it follows that there is a $6$-$3$-valid $2$-cycle-free review assignment computable with \cref{alg:greedy-short-cycle-free}.

As we see in the experiments in the next section, our algorithm returns $2/3/4$-cycle-free review assignments even well beyond the theoretical guarantees given above.
We also remark that \cref{alg:greedy-short-cycle-free} allows for an easy extension to the weighted case which we use in our experiments in the next section.
To this end, in the first case (\cref{line:greedy-grow-matching-end}) we do not pick an arbitrary edge~$(a,p)$ but a eligible edge of maximum weight to be added to the assignment~$E'$.

\section{Experiments} \label{se:experiments}
    
In this section, we compare the weight of review assignments computed using different methods and analyze the occurrences of review cycles.\footnote{The code for our experiments is available at \url{github.com/n-boehmer/Combating-Collusion-Rings-is-Hard-but-Possible}.}
For this, we use a dataset from  the 2018 International Conference on Learning Representations (ICLR '18) prepared by \citet{DBLP:conf/ijcai/Xu0SS19}. 
\citet{DBLP:conf/ijcai/Xu0SS19} collected all $911$ papers submitted to ICLR '18 and the identity of all $2428$ authors. 
As reviewers identities are unknown, they considered all authors to be reviewers and computed for each author-paper pair a similarity score.\footnote{To the best of our knowledge, in all other publicly available datasets, there are similarity scores for reviewer-paper pairs but the link between the identities of authors and reviewers is missing (as this is considered sensitive information).} 

From the dataset of \citet{DBLP:conf/ijcai/Xu0SS19}, we created multiple instances of \wcfpr as follows. 
Given a number $n_P$ of papers and a ratio $r_{AP}$ of the numbers of agents and papers, we sample a subset of $n_P$ of the $911$ ICLR '18 papers and set this as our set of papers. 
Subsequently, we compute the set of all authors of one of these papers and sample a subset of $r_{AP}\cdot n_P$ authors and set this as our set of agents. 
Notably, the created instances can be seen as particularly challenging when it comes to avoiding review cycles, as in reality also ``uncritical'' reviewers, i.e., reviewers that do not author any paper, exist. 

As done in other papers using the same dataset, we focus on the case with $\ddp=3$ and $\ccr=6$, i.e., every paper needs exactly three reviews and each agent can review at most six papers \citep{DBLP:conf/ijcai/Xu0SS19,DBLP:conf/nips/JecmenZLSCF20}.  
We consider three different types of review assignments:
As ``optimal'' we denote a maximum-weight $\ccr$-$\ddp$-valid review assignment. 
Such an assignment can be computed using a simple Linear Program (LP) as, for instance, described by \citet{taylor2008optimal}. 
As ``optimal $z$-cycle free'' we denote a maximum-weight $\ccr$-$\ddp$-valid $z$-cycle-free review assignment. 
This solution can be computed by treating the LP of \citet{taylor2008optimal} as an Integer Linear Program (ILP) and adding for each possible $i$-cycle for $i\in [z]$ a separate constraint which imposes that at least one of the agent-paper pairs from the cycle is not assigned. 
We solved all (I)LPs using \citet{gurobi}.
Lastly, as ``heuristic $z$-cycle free'', we denote a $\ccr$-$\ddp$-valid $z$-cycle-free review assignment computed by the weighted variant of \cref{alg:greedy-short-cycle-free} as described at the end of \Cref{sec:heuristic}.\footnote{We could not use the heuristics of \citet{DBLP:conf/iccnc/Guo0CWL18} as these are not available and their algorithm details are ambiguous.} 
In all experiments conducted in this section, the heuristic always returned a solution despite the fact that most of the time we are beyond the setting in which \cref{thm:sol-exists} guarantees this behavior of the heuristic.
In experiment I presented in the following subsection, for $z=2/3/4$, an unoptimized implementation of our heuristic was always able to find a $z$-cycle-free review assignment in less than $30$ seconds, being on average around $2$ times faster than the ``optimal'' LP, on average around $3.7$ times faster than the ``optimal $2$-cycle free'' ILP, and on average more than $100$ times faster than the ``optimal $3$-cycle free'' ILP.

\subsection{Experiment I}
In this experiment, we focus on the case where the total number of needed reviews is the same as the total number of reviews that can be written,
which is in some sense the most ``challenging'' but probably also one of the more realistic scenarios.
Specifically, for $n_P\in \{150,175,\dots, 900\}$, we prepared $100$ instances with $r_{AP}=0.5$ as described above and computed for each of these instances the optimal, heuristic $2$/$3$/$4$-cycle-free, and optimal $2$-cycle-free review assignment. 
Moreover, for all instances with $n_P\leq 225$, we also computed the optimal $3$-cycle-free review assignment
(for larger instances the ILP solver run out of memory.)

To measure the ``price of $z$-cycle freeness'', in \Cref{fig:quality}, we display the weights of different  cycle-free review assignments divided by the weight of an optimal review assignment.  
What stands out here is that forbidding the existence of $2$-cycles only comes at the cost of decreasing the assignment's weight by on average at most $0.8\%$ (if the optimal $2$-cycle-free assignment is used). 
Turning to the results produced by our heuristic, the quality decrease for $2$/$3$/$4$-cycle-free assignments lies, on average, around $3.1\%$, $3.2\%$, and $3.3\%$. 
The weight of assignments computed using our heuristic is thus clearly worse than the weight of the optimal cycle-free assignment, yet still not far away from the the weight of an optimal assignment. 
What is particularly surprising here is that for both our heuristic and the optimal cycle-free assignment, whether $2$, $3$ or $4$ cycles are forbidden seems to be rather irrelevant for the quality decrease. 
All in all, it is encouraging that $2$/$3$/$4$-cycle freeness can be realized at a low cost independent of whether our heuristic or an ILP is used.

\begin{figure*}[t!]
	\centering
	\begin{minipage}[t]{.32\textwidth}
		\centering
		\resizebox{\textwidth}{!}{
\begin{tikzpicture}

\definecolor{color0}{rgb}{1,0.549019607843137,0}
\definecolor{color1}{rgb}{0.133333333333333,0.545098039215686,0.133333333333333}

\begin{axis}[
legend columns=2, 
legend cell align={left},
legend style={
	fill opacity=0.8,
	draw opacity=1,
	draw=none,
	text opacity=1,
	at={(0.5,1.3)},
	line width=3pt,
	anchor=north,
	/tikz/column 2/.style={
		column sep=10pt,
	}
},
legend entries={2-cycle free,
	heuristic cycle free,
	3-cycle free, optimal cycle free,
	4-cycle free},
tick align=outside,
tick pos=left,
x grid style={white!69.0196078431373!black},
xlabel={number of papers},
xmin=130, xmax=920,
xtick style={color=black},
y grid style={white!69.0196078431373!black},
ylabel={weight normalized by weight of optimal solution},
ymin=0.96, ymax=0.996,
ytick style={color=black},
ytick={0.96,0.965,0.97,0.975,0.98,0.985,0.99,0.995},
yticklabels={0.96,0.965,0.97,0.975,0.98,0.985,0.99,0.995},
ylabel style={yshift=0.25cm}
]
\addlegendimage{red!54.5098039215686!black}
\addlegendimage{gray}
\addlegendimage{color0}
\addlegendimage{gray,dash pattern=on 7pt off 4pt}
\addlegendimage{color1}
\addplot [line width=1.5pt, red!54.5098039215686!black, dash pattern=on 6pt off 3pt]
table {%
150 0.991926574347705
175 0.992667667406678
200 0.99291617740913
225 0.993643537526456
250 0.99334260733802
275 0.993225577072392
300 0.993021347245257
325 0.993346331176092
350 0.993463865218339
375 0.993775776392882
400 0.993766108819567
425 0.993673085374533
450 0.993939230465521
475 0.993977685215255
500 0.993248477116723
525 0.99404592781744
550 0.993685268834669
575 0.993997531834272
600 0.993819968360993
625 0.993811463542213
650 0.99409851273386
675 0.99407746489986
700 0.993974596195344
725 0.994040546507579
750 0.99407317116466
775 0.994204715548222
800 0.99418249296385
825 0.993983181584122
850 0.994331382646615
875 0.994149491000736
900 0.994207326081006
};
\addplot [line width=1.5pt, red!54.5098039215686!black, solid]
table {%
150 0.968510036128565
175 0.970098946584062
200 0.969364391288481
225 0.969438959252551
250 0.968174771579804
275 0.969704760187726
300 0.970073660852376
325 0.970128813155371
350 0.970676219132571
375 0.970174256617788
400 0.970220823462127
425 0.970061122471152
450 0.970227743264512
475 0.969908261794796
500 0.969815917985764
525 0.969998125197392
550 0.969896572554831
575 0.970086284368001
600 0.969873811465109
625 0.970483332012341
650 0.970187358458012
675 0.969976379222872
700 0.969855275388839
725 0.969835996796407
750 0.970122835807463
775 0.970456930723669
800 0.969805332582258
825 0.969737566702351
850 0.969929994333001
875 0.96995286974093
900 0.970152100435763
};
\addplot [line width=1.5pt, color0, dash pattern=on 6pt off 3pt]
table {%
150 0.990095207834939
175 0.990395627786302
200 0.991215902612412
225 0.991248268040838
};
\addplot [line width=1.5pt, color0, solid]
table {%
150 0.966043473776896
175 0.968108952737625
200 0.967259062577482
225 0.9675798599254
250 0.966406986141914
275 0.967920644516303
300 0.968627310900345
325 0.96850332978838
350 0.969269087073347
375 0.968729446810716
400 0.968811114708596
425 0.968611359847732
450 0.968941075892002
475 0.968496207309294
500 0.968634090803401
525 0.968692457779898
550 0.968805122684117
575 0.969009997310226
600 0.968635277974614
625 0.969274791882022
650 0.96894729877903
675 0.968647494593104
700 0.968819804355576
725 0.968811209687376
750 0.968936672999162
775 0.969334800692915
800 0.968637829782921
825 0.968667838687287
850 0.968774356243043
875 0.968853143574644
900 0.96891980682276
};
\addplot [line width=1.5pt, color1, solid]
table {%
150 0.963155572536885
175 0.965811251516701
200 0.964994405771822
225 0.965464534247496
250 0.964522259464502
275 0.966025804007046
300 0.967073691097973
325 0.966936040273682
350 0.967795516455457
375 0.967319003698681
400 0.967396122108348
425 0.967156483992846
450 0.967584863015462
475 0.967388147482651
500 0.967370155190196
525 0.967361876749796
550 0.9675590279648
575 0.967864802732198
600 0.967387378710376
625 0.968106638533988
650 0.967704720281115
675 0.967432414826679
700 0.967826624122599
725 0.967818703940084
750 0.967805520032026
775 0.96830520830651
800 0.967578463359523
825 0.967599746775323
850 0.967680763851255
875 0.967889007667123
900 0.967937269001565
};
\end{axis}

\end{tikzpicture}}
		\caption{For different values of $z$, weight of an optimal/heuristic $z$-cycle-free assignment divided by the weight of an optimal assignment.}\label{fig:quality}
	\end{minipage}\hfill
	\begin{minipage}[t]{.32\textwidth}
		\centering
		\resizebox{\textwidth}{!}{
\begin{tikzpicture}

\definecolor{color0}{rgb}{1,0.549019607843137,0}
\definecolor{color1}{rgb}{0.133333333333333,0.545098039215686,0.133333333333333}

\begin{axis}[
legend columns=2, 
legend cell align={left},
legend style={
	fill opacity=0.8,
	draw opacity=1,
	draw=none,
	text opacity=1,
	at={(0.5,1.3)},
	line width=3pt,
	anchor=north,
	/tikz/column 2/.style={
		column sep=10pt,
	}
},
legend entries={2-cycles, optimal,
	2/3-cycles, heuristic 2-cycle free,
	2/3/4-cycles, heuristic 3-cyclce free},
tick align=outside,
tick pos=left,
x grid style={white!69.0196078431373!black},
xlabel={number of papers},
xmin=130, xmax=920,
xtick style={color=black},
y grid style={white!69.0196078431373!black},
ylabel={fraction of agents in a review cycle},
ymin=0.15, ymax=0.826666666666667,
ytick style={color=black},
ytick={-0.1,0,0.1,0.2,0.3,0.4,0.5,0.6,0.7,0.8,0.9},
yticklabels={−0.1,0.0,0.1,0.2,0.3,0.4,0.5,0.6,0.7,0.8,0.9}
]
\addlegendimage{red!54.5098039215686!black}
\addlegendimage{gray}
\addlegendimage{color0}
\addlegendimage{gray,dash pattern=on 2pt off 2pt}
\addlegendimage{color1}
\addlegendimage{gray,dash pattern=on 7pt off 4pt}
\addplot [line width=1.5pt, red!54.5098039215686!black,solid]
table {%
150 0.404210526315789
175 0.374318181818182
200 0.378415841584159
225 0.374690265486726
250 0.388571428571429
275 0.361884057971015
300 0.346887417218543
325 0.358036809815951
350 0.342386363636364
375 0.348085106382979
400 0.341293532338309
425 0.34
450 0.328318584070796
475 0.331344537815126
500 0.327171314741036
525 0.337034220532319
550 0.327391304347826
575 0.326666666666667
600 0.323654485049834
625 0.327220447284345
650 0.328957055214724
675 0.326272189349112
700 0.319202279202279
725 0.324573002754821
750 0.328031914893617
775 0.327731958762887
800 0.330573566084788
825 0.327893462469734
850 0.328544600938967
875 0.321461187214612
900 0.323237250554324
};
\addplot [line width=1.5pt, color0,solid]
table {%
150 0.581052631578947
175 0.544545454545455
200 0.525346534653466
225 0.507079646017699
250 0.523174603174603
275 0.495072463768116
300 0.48476821192053
325 0.477668711656442
350 0.468068181818182
375 0.46468085106383
400 0.451641791044776
425 0.453521126760563
450 0.440353982300885
475 0.440168067226891
500 0.429800796812749
525 0.439391634980989
550 0.430652173913043
575 0.429861111111111
600 0.425913621262458
625 0.423514376996805
650 0.430122699386503
675 0.42491124260355
700 0.416695156695157
725 0.422038567493113
750 0.426648936170213
775 0.420670103092784
800 0.428029925187032
825 0.426004842615012
850 0.419389671361502
875 0.412328767123288
900 0.413968957871397
};
\addplot [line width=1.5pt, color1,solid]
table {%
150 0.755263157894737
175 0.720909090909091
200 0.701584158415842
225 0.683362831858407
250 0.674603174603174
275 0.656086956521739
300 0.648476821192053
325 0.642208588957055
350 0.640454545454545
375 0.620106382978724
400 0.602288557213931
425 0.610892018779343
450 0.593362831858407
475 0.593781512605042
500 0.578406374501992
525 0.583650190114068
550 0.582028985507246
575 0.565625
600 0.570232558139535
625 0.563833865814696
650 0.57521472392638
675 0.567337278106509
700 0.551509971509971
725 0.56
750 0.569840425531915
775 0.549072164948454
800 0.564638403990025
825 0.562905569007264
850 0.551596244131455
875 0.546529680365297
900 0.545454545454546
};
\addplot [line width=1.5pt, color0, dotted]
table {%
150 0.374736842105263
175 0.341136363636364
200 0.311089108910891
225 0.310973451327434
250 0.304761904761905
275 0.29
300 0.265562913907285
325 0.259018404907975
350 0.242272727272727
375 0.236702127659575
400 0.229353233830846
425 0.234835680751174
450 0.229380530973451
475 0.222689075630252
500 0.223187250996016
525 0.224562737642586
550 0.209420289855072
575 0.205902777777778
600 0.216146179401993
625 0.204600638977636
650 0.20361963190184
675 0.216272189349112
700 0.205242165242165
725 0.1966391184573
750 0.202287234042553
775 0.199536082474227
800 0.205586034912718
825 0.205859564164649
850 0.199765258215962
875 0.195799086757991
900 0.197073170731707
};
\addplot [line width=1.5pt, color1, dotted]
table {%
150 0.679473684210526
175 0.654318181818182
200 0.631683168316832
225 0.608849557522124
250 0.591428571428572
275 0.571159420289855
300 0.554834437086093
325 0.539018404907976
350 0.515568181818182
375 0.503617021276596
400 0.494228855721393
425 0.501784037558686
450 0.47929203539823
475 0.465546218487395
500 0.470836653386454
525 0.476197718631179
550 0.463115942028985
575 0.455347222222222
600 0.4578073089701
625 0.45150159744409
650 0.449263803680982
675 0.450532544378698
700 0.430598290598291
725 0.426391184573003
750 0.439414893617021
775 0.41201030927835
800 0.430324189526185
825 0.438740920096852
850 0.423192488262911
875 0.424292237442922
900 0.41840354767184
};
\addplot [line width=1.5pt, color1, dash pattern=on 6pt off 3pt]
table {%
150 0.573684210526316
175 0.526136363636364
200 0.527524752475248
225 0.489203539823009
250 0.471587301587302
275 0.459565217391304
300 0.442384105960265
325 0.430552147239264
350 0.414431818181818
375 0.398936170212766
400 0.379303482587065
425 0.38037558685446
450 0.38070796460177
475 0.358235294117647
500 0.358884462151394
525 0.364334600760456
550 0.34963768115942
575 0.33875
600 0.348172757475083
625 0.340830670926518
650 0.348159509202454
675 0.339349112426035
700 0.316581196581196
725 0.314490358126722
750 0.330531914893617
775 0.300567010309278
800 0.317605985037406
825 0.322905569007264
850 0.316431924882629
875 0.30310502283105
900 0.300709534368071
};
\end{axis}

\end{tikzpicture}} 
		\caption{Fraction of agents that are part of a review cycle of at most some length for different types of assignment.}\label{fig:cyles}
	\end{minipage}\hfill
	\begin{minipage}[t]{.32\textwidth}
	\centering
		\resizebox{\textwidth}{!}{
\begin{tikzpicture}

\definecolor{color0}{rgb}{1,0.549019607843137,0}
\definecolor{color1}{rgb}{0.133333333333333,0.545098039215686,0.133333333333333}

\begin{axis}[
legend columns=2, 
legend cell align={left},
legend style={
	fill opacity=0.8,
	draw opacity=1,
	draw=none,
	text opacity=1,
	at={(0.5,1.3)},
	line width=3pt,
	anchor=north,
	/tikz/column 2/.style={
		column sep=10pt,
	}
},
legend entries={2-cycles, optimal,
	2/3-cycles, heuristic 2-cycle free,
	2/3/4-cycles, heuristic 3-cyclce free},
tick align=outside,
tick pos=left,
x grid style={white!69.0196078431373!black},
xlabel={number of papers},
xmin=130, xmax=920,
xtick style={color=black},
y grid style={white!69.0196078431373!black},
ylabel={fraction of papers in a review cycle},
ymin=0.08, ymax=0.35,
ytick style={color=black},
ytick={-0.1,0,0.1,0.15,0.2,0.25,0.3,0.4,0.5,0.6,0.7,0.8,0.9},
yticklabels={−0.1,0.0,0.1,0.15,0.2,0.25,0.3,0.4,0.5,0.6,0.7,0.8,0.9}
]
\addlegendimage{red!54.5098039215686!black}
\addlegendimage{gray}
\addlegendimage{color0}
\addlegendimage{gray,dotted}
\addlegendimage{color1}
\addlegendimage{gray,dash pattern=on 6pt off 3pt}
\addplot [line width=1.5pt, red!54.5098039215686!black,solid]
table {%
150 0.182933333333333
175 0.169828571428571
200 0.1709
225 0.168711111111111
250 0.17552
275 0.164581818181818
300 0.1618
325 0.163815384615385
350 0.156971428571429
375 0.161706666666667
400 0.1553
425 0.155011764705882
450 0.154355555555556
475 0.154063157894737
500 0.15072
525 0.155885714285714
550 0.151454545454545
575 0.150678260869565
600 0.151066666666667
625 0.152864
650 0.152984615384615
675 0.152562962962963
700 0.148228571428571
725 0.150813793103448
750 0.153173333333333
775 0.151354838709677
800 0.154975
825 0.154060606060606
850 0.152658823529412
875 0.14928
900 0.153466666666667
};
\addplot [line width=1.5pt, color0,solid]
table {%
150 0.260133333333333
175 0.246971428571429
200 0.2384
225 0.230755555555556
250 0.23736
275 0.227636363636364
300 0.2242
325 0.219569230769231
350 0.216057142857143
375 0.215306666666667
400 0.2089
425 0.209176470588235
450 0.207822222222222
475 0.205178947368421
500 0.20036
525 0.205942857142857
550 0.202218181818182
575 0.202330434782609
600 0.201
625 0.200096
650 0.204646153846154
675 0.203081481481482
700 0.197457142857143
725 0.200220689655173
750 0.202613333333333
775 0.198270967741936
800 0.20495
825 0.203951515151515
850 0.199411764705882
875 0.19632
900 0.201577777777778
};
\addplot [line width=1.5pt, color1,solid]
table {%
150 0.334
175 0.3192
200 0.3143
225 0.306844444444444
250 0.3036
275 0.299636363636363
300 0.297333333333333
325 0.293107692307692
350 0.293714285714286
375 0.285013333333333
400 0.27855
425 0.283529411764706
450 0.279466666666667
475 0.277136842105263
500 0.27208
525 0.274819047619048
550 0.275636363636364
575 0.266469565217391
600 0.270766666666667
625 0.269632
650 0.275538461538462
675 0.274281481481481
700 0.263257142857143
725 0.269351724137931
750 0.273013333333333
775 0.262890322580645
800 0.273375
825 0.273963636363636
850 0.266423529411765
875 0.263908571428571
900 0.270755555555556
};
\addplot [line width=1.5pt, color0, dotted]
table {%
150 0.1748
175 0.158171428571429
200 0.1467
225 0.145155555555556
250 0.1436
275 0.137527272727273
300 0.1262
325 0.121415384615385
350 0.115428571428571
375 0.11312
400 0.1099
425 0.113223529411765
450 0.109555555555556
475 0.107621052631579
500 0.10684
525 0.108342857142857
550 0.102072727272727
575 0.0987826086956522
600 0.104466666666667
625 0.099424
650 0.0994461538461539
675 0.106814814814815
700 0.100057142857143
725 0.0943724137931035
750 0.09808
775 0.0965161290322581
800 0.101
825 0.101066666666667
850 0.0977882352941176
875 0.0949257142857143
900 0.0976222222222222
};
\addplot [line width=1.5pt, color1, dotted]
table {%
150 0.3
175 0.292457142857143
200 0.2861
225 0.276266666666667
250 0.27024
275 0.264436363636364
300 0.256933333333333
325 0.248738461538461
350 0.240228571428571
375 0.235733333333333
400 0.2316
425 0.238447058823529
450 0.227066666666667
475 0.221052631578947
500 0.22376
525 0.227047619047619
550 0.223890909090909
575 0.218539130434783
600 0.221166666666667
625 0.218144
650 0.2204
675 0.220533333333333
700 0.210142857142857
725 0.206951724137931
750 0.213946666666667
775 0.202477419354839
800 0.214
825 0.216484848484849
850 0.209882352941177
875 0.20864
900 0.210311111111111
};
\addplot [line width=1.5pt, color1, dash pattern=on 6pt off 3pt]
table {%
150 0.259066666666667
175 0.236914285714286
200 0.2411
225 0.225422222222222
250 0.21872
275 0.213163636363636
300 0.208133333333333
325 0.201661538461539
350 0.193371428571429
375 0.187146666666667
400 0.1797
425 0.182164705882353
450 0.182
475 0.171621052631579
500 0.1724
525 0.174895238095238
550 0.169636363636364
575 0.163304347826087
600 0.169933333333333
625 0.164352
650 0.171384615384615
675 0.16637037037037
700 0.154657142857143
725 0.154510344827586
750 0.16104
775 0.147406451612903
800 0.157475
825 0.160436363636364
850 0.157458823529412
875 0.149371428571429
900 0.150511111111111
};
\end{axis}

\end{tikzpicture}} 
	\caption{Fraction of papers that are part of a review cycle of at most some length for different types of assignment.} \label{fig:fracPapers}
	\end{minipage}
\end{figure*}

The necessity of dealing with review cycles is underlined by the data displayed in \Cref{fig:cyles}. 
Here, we show the fraction of agents that are contained in at least one review cycle of some length in an optimal assignment and in a heuristic $2/3$-cycle-free assignment. 
Overall, as the number of papers increases the fraction of agents contained in review cycles constantly decreases, yet for all considered values of $n_P$ the results are worrisome. 
In the optimal assignment for $150$ papers, the fraction of agents contained in a review cycle of length at most $2/3/4$ is, on average, $40\%/58\%/76\%$ , while even for $900$ papers, still $32\%/41\%/55\%$ of agents are contained in a review cycle. 
Considering heuristic $z$-cycle-free assignments, the fraction of agents contained in a cycle of length $z+1$ is considerably lower than for the optimal solution but still non-negligible (the results for optimal $2/3$-cycle-free assignments are similar to the displayed results for our heuristic).
    
We also computed the fraction of papers that are contained in at least one review cycle (see \Cref{fig:fracPapers}).  
The results are as in \Cref{fig:cyles} with all values roughly halved, e.g, even in the optimal assignment for $900$ papers, $15\%/20\%/27\%$ of papers are contained in a review cycle of length at most $2/3/4$. 
An intuitive explanation for this difference between agents and papers is that the number of papers is twice the number of agents and that there exist some papers without reviewing authors. 
Overall, it is striking that even for a high number of papers, in an optimal assignment around $15\%$ of papers could have a considerably higher chance of getting accepted if two agents coordinate to give each others paper better reviews and $32\%$ of reviewers would have an opportunity to participate in such a collusion. 

  \begin{figure*}[t!]
\begin{minipage}[t]{.48\textwidth}
		\centering
	\resizebox{0.7\textwidth}{!}{
\begin{tikzpicture}

\definecolor{color0}{rgb}{1,0.549019607843137,0}
\definecolor{color1}{rgb}{0.133333333333333,0.545098039215686,0.133333333333333}

\begin{axis}[
legend columns=2, 
legend cell align={left},
legend style={
	fill opacity=0.8,
	draw opacity=1,
	draw=none,
	text opacity=1,
	at={(0.5,1.3)},
	line width=3pt,
	anchor=north,
	/tikz/column 2/.style={
		column sep=10pt,
	}
},
legend entries={2-cycle free,
	heuristic cycle free,
	3-cycle free, optimal cycle free,
	4-cycle free},
tick align=outside,
tick pos=left,
x grid style={white!69.0196078431373!black},
xlabel={$\nicefrac{\text{number of agents}}{\text{number of papers}}$},
xmin=0.425, xmax=2.075,
xtick style={color=black},
xtick={0.4,0.6,0.8,1,1.2,1.4,1.6,1.8,2,2.2},
xticklabels={0.4,0.6,0.8,1.0,1.2,1.4,1.6,1.8,2.0,2.2},
y grid style={white!69.0196078431373!black},
ylabel={weight normalized by weight of optimal solution},
ymin=0.96, ymax=0.996,
ytick style={color=black},
ytick={0.96,0.965,0.97,0.975,0.98,0.985,0.99,0.995},
yticklabels={0.96,0.965,0.97,0.975,0.98,0.985,0.99,0.995},
ylabel style={yshift=0.25cm}
]
\addlegendimage{red!54.5098039215686!black}
\addlegendimage{gray}
\addlegendimage{color0}
\addlegendimage{gray,dotted}
\addlegendimage{color1}
\addplot [line width=1.5pt, red!54.5098039215686!black, dashed]
table {%
0.5 0.99299232792165
0.6 0.992139334565611
0.7 0.991215648799322
0.8 0.99092817572984
0.9 0.989888835807197
1 0.989737930928185
1.1 0.988587875567876
1.2 0.988763409126611
1.3 0.988823725889548
1.4 0.988925039521304
1.5 0.988014789797783
1.6 0.988213955685074
1.7 0.987298334469413
1.8 0.987264589777595
1.9 0.987099420216515
2 0.987422640839029
};
\addplot [line width=1.5pt, red!54.5098039215686!black, solid]
table {%
0.5 0.970185966754706
0.6 0.972513409102817
0.7 0.974816137800935
0.8 0.976451310543773
0.9 0.976991773740888
1 0.978140595363959
1.1 0.978432905015212
1.2 0.97919904290603
1.3 0.980238443691698
1.4 0.980809690906076
1.5 0.980684375852006
1.6 0.981299712490017
1.7 0.980941033315102
1.8 0.981039809073676
1.9 0.981004854411886
2 0.98171819039823
};
\addplot [line width=1.5pt, color0, dashed]
table {%
0.5 0.990100190448635
0.6 0.989128957692013
0.7 0.986468054265244
0.8 0.987259605540806
0.9 0.98491216585713
1 0.986173439919578
1.1 0.986254303727512
1.2 0.985185220050601
1.3 0.984613392020042
1.4 0.986032188267647
1.5 0.983674858630122
1.6 0.984519597035013
1.7 0.984462465907539
1.8 0.986514543940129
1.9 0.98557797182043
2 0.98483021478865
};
\addplot [line width=1.5pt, color0, solid]
table {%
0.5 0.968222631976865
0.6 0.970248390395589
0.7 0.972423288077584
0.8 0.974080237503292
0.9 0.974367559399936
1 0.975554486059007
1.1 0.975813151616617
1.2 0.976506001937259
1.3 0.977403263692971
1.4 0.978225979592959
1.5 0.97772185696346
1.6 0.978467137102889
1.7 0.977953037659628
1.8 0.977987054254864
1.9 0.977976512637854
2 0.9786844403342
};
\addplot [line width=1.5pt, color1, solid]
table {%
0.5 0.965856734019942
0.6 0.967670415112128
0.7 0.969843674884357
0.8 0.971667928495624
0.9 0.971848646975308
1 0.973188253845329
1.1 0.973448219189592
1.2 0.974056772317046
1.3 0.975004978820614
1.4 0.975903684536758
1.5 0.975256965220739
1.6 0.976023222184217
1.7 0.97564529218127
1.8 0.975532125449385
1.9 0.975703353905242
2 0.976270513989353
};
\end{axis}

\end{tikzpicture}}  
	\caption{For different values of $z$, weight of a optimal/heuristic $z$-cycle-free assignment divided by the weight of an optimal assignment.} \label{fig:qualPapers}
	\end{minipage} \hfill
	\begin{minipage}[t]{.48\textwidth}
		\centering
		\resizebox{0.7\textwidth}{!}{
\begin{tikzpicture}

\definecolor{color0}{rgb}{1,0.549019607843137,0}
\definecolor{color1}{rgb}{0.133333333333333,0.545098039215686,0.133333333333333}

\begin{axis}[
legend columns=2, 
legend cell align={left},
legend style={
	fill opacity=0.8,
	draw opacity=1,
	draw=none,
	text opacity=1,
	at={(0.5,1.3)},
	line width=3pt,
	anchor=north,
	/tikz/column 2/.style={
		column sep=10pt,
	}
},
legend entries={2-cycles, fraction agents,
	2/3-cycles,fraction papers,
	2/3/4-cycles},
tick align=outside,
tick pos=left,
x grid style={white!69.0196078431373!black},
xlabel={$\nicefrac{\text{number of agents}}{\text{number of papers}}$},
xmin=0.425, xmax=2.075,
xtick style={color=black},
xtick={0.4,0.6,0.8,1,1.2,1.4,1.6,1.8,2,2.2},
xticklabels={0.4,0.6,0.8,1.0,1.2,1.4,1.6,1.8,2.0,2.2},
y grid style={white!69.0196078431373!black},
ylabel={fraction contained in a review cylce},
ymin=0.14, ymax=0.71,
ytick style={color=black},
ytick={-0.1,0,0.1,0.2,0.3,0.4,0.5,0.6,0.7,0.8,0.9},
yticklabels={−0.1,0.0,0.1,0.2,0.3,0.4,0.5,0.6,0.7,0.8,0.9}
]
\addlegendimage{red!54.5098039215686!black}
\addlegendimage{gray}
\addlegendimage{color0}
\addlegendimage{gray,dash pattern=on 7pt off 4pt}
\addlegendimage{color1}
\addplot [line width=1.5pt, red!54.5098039215686!black]
table {%
0.5 0.38019801980198
0.6 0.352561983471074
0.7 0.330992907801418
0.8 0.306645962732919
0.9 0.281436464088398
1 0.271492537313433
1.1 0.259502262443439
1.2 0.244688796680498
1.3 0.235708812260536
1.4 0.219074733096085
1.5 0.211661129568106
1.6 0.20190031152648
1.7 0.200850439882698
1.8 0.191218836565097
1.9 0.188293963254593
2 0.176234413965087
};
\addplot [line width=1.5pt, color0]
table {%
0.5 0.526336633663367
0.6 0.47603305785124
0.7 0.434751773049645
0.8 0.4
0.9 0.358563535911603
1 0.343034825870647
1.1 0.325701357466063
1.2 0.302697095435685
1.3 0.289885057471264
1.4 0.269501779359431
1.5 0.25953488372093
1.6 0.247788161993769
1.7 0.242961876832845
1.8 0.229778393351801
1.9 0.22753280839895
2 0.215286783042394
};
\addplot [line width=1.5pt, color1]
table {%
0.5 0.684158415841585
0.6 0.615371900826446
0.7 0.549503546099291
0.8 0.49472049689441
0.9 0.442707182320442
1 0.42318407960199
1.1 0.388823529411765
1.2 0.364315352697095
1.3 0.347049808429119
1.4 0.32067615658363
1.5 0.310797342192691
1.6 0.296666666666667
1.7 0.283020527859238
1.8 0.270332409972299
1.9 0.264304461942257
2 0.253840399002494
};
\addplot [line width=1.5pt, red!54.5098039215686!black, dash pattern=on 6pt off 3pt]
table {%
0.5 0.17485
0.6 0.187
0.7 0.20625
0.8 0.2105
0.9 0.2135
1 0.22715
1.1 0.2339
1.2 0.23945
1.3 0.2445
1.4 0.2418
1.5 0.2474
1.6 0.2496
1.7 0.25885
1.8 0.25725
1.9 0.2669
2 0.2618
};
\addplot [line width=1.5pt, color0, dash pattern=on 6pt off 3pt]
table {%
0.5 0.2409
0.6 0.2527
0.7 0.26985
0.8 0.2731
0.9 0.27145
1 0.2858
1.1 0.2942
1.2 0.29375
1.3 0.2989
1.4 0.2953
1.5 0.30355
1.6 0.30495
1.7 0.313
1.8 0.3091
1.9 0.31985
2 0.3175
};
\addplot [line width=1.5pt, color1, dash pattern=on 6pt off 3pt]
table {%
0.5 0.30695
0.6 0.321
0.7 0.3357
0.8 0.33245
0.9 0.3331
1 0.34975
1.1 0.3482
1.2 0.35105
1.3 0.3538
1.4 0.3482
1.5 0.35985
1.6 0.3623
1.7 0.36045
1.8 0.3608
1.9 0.3668
2 0.36935
};
\end{axis}

\end{tikzpicture}} 
		\caption{Fraction of agents/papers that are part of a review cycle of at most some length in an optimal assignment for $200$ papers and between $100$ and $400$ agents.} \label{fig:proba_cyles}
	\end{minipage}
\end{figure*}
\subsection{Experiment II} 
In this experiment, we analyze how the results from experiment I depend on the assumption that the supply and demand of reviews exactly matches. 
In particular, as describe before, for $r_{AP}\in \{0.5,0.6,\dots,1.9,2\}$  we prepared $100$ instances with $200$ papers and $r_{AP}\cdot 200$ agents  (we also repeated this experiment for $400$ and $600$ papers producing similar results) and computed the different types of review assignments. 
Considering the assignment weights (see \Cref{fig:qualPapers}), increasing $r_{AP}$ from $0.5$ to $2$, the normalized weight of an optimal $2/3$-cycle-free assignment decreases by $0.005$ to $0.987/0.985$, while the normalized weight of a heuristic $2/3/4$-cycle-free assignment increases by $0.01$ to $0.982/0.979/0.976$:  
our heuristic performs particularly well if there are (considerably) more reviews available then needed; this supports our theoretical statements for our heuristic in \cref{sec:heuristic}. 

Turning to the possible impact of review cycles, we visualize the fraction of agents/papers contained in a review cycle in an optimal assignment in \Cref{fig:proba_cyles}.\footnote{For readability, we do not display the values for the optimal/heuristic cycle-free assignment, as their relationship to the optimal assignment is again similar as in \Cref{fig:cyles}.} 
While the fraction of agents contained in a review cycle constantly and significantly decreases if more and more agents are added, the fraction of papers contained in a cycle constantly increases. 
The former observation is quite intuitive, as when more and more agents are added, the average review load decreases and even if the number of review cycles remains the same, it is likely that the fraction of agents contained in one gets smaller. 
The latter observation is less intuitive but probably a consequence of the fact that, starting with $r_{AP}=0.5$,
for some papers none of the authors is part of the agent set, implying that these papers cannot be part of a review cycle; however, if we start to add more and more agents, more and more papers can potentially be part of a review cycle. 
Overall, it might be quite counter intuitive that adding more and more reviewers (that are also authors) to the reviewer pool does not decrease the number of papers contained in a review cycle but increases them.

\section{Conclusion}
Our work provides a first systematic analysis of \cfpr.
On the theoretical side, we show that \cfpr is a computationally hard problem even in very restricted settings,
yet practically relevant polynomial-time solvable special cases exist.
In our practical analysis, we could show that in assignments that do not care for review cycles a high fraction of authors and papers will likely be part of a short review cycle.
While collusion rings can certainly also emerge without the existence of review cycles,
for example, when authors coordinate over multiple conferences~\citep{DBLP:journals/cacm/Littman21,survey},
allowing so many easy opportunities means to leave a huge door unlocked without good reason:
Our heuristic significantly improves the situation, since it seems to always find cycle-free review assignment at a very low quality loss.

For future work, it would be valuable to further investigate the limits of our heuristic.
While our current bounds are certainly not tight, there are also clear limitations for possible extensions imposed by our NP-hardness results in quite restrictive settings from \Cref{sec:hardness}.
However, a concrete and practically very relevant open question is whether the minimum degree in our analysis can be replaced by the average degree; this would make the results much more robust against outliers.
Finally, due to the lack of data, we tested our model on just one dataset.
Obtaining more data to test our and other models on would be extremely valuable.
 
\section*{Acknowledgments}
NB was supported by the DFG project MaMu (NI
369/19) and by the DFG project ComSoc-MPMS (NI 369/22).
RB was partially supported by the DFG project AFFA (BR~5207/1 and NI~369/15).
This work was started at the research retreat of the TU Berlin
Algorithms and Computational Complexity group held in September 2020.

\bibliographystyle{plainnat}

\end{document}